\pdfoutput=1

\documentclass[11pt,a4paper]{article}
\usepackage{jheppub}
\usepackage{mciteplus}
\usepackage{hyperref}
\usepackage{amssymb}
\usepackage{amsthm}
\usepackage{amstext}
\usepackage{amsmath}
\usepackage{amsfonts}
\usepackage{dsfont}
\usepackage{cancel}
\usepackage{wrapfig}
\usepackage{graphicx}
\usepackage{pstricks}
\usepackage{color}

\usepackage{enumitem}

\usepackage{xparse}





\def\d{{\rm d}}



\newcommand{\compactspace}{ \hspace{0mm}}
\newcommand{\cutkernel}{\frac{\tilde P^\frac{D-L-E-1}{2}}{z_{r_1}
    \ldots  z_{r_{k-c}}z_{r_{k-c+1}}^{\nu_{r_{k-c+1}}} \ldots z_{r_{m-c}}^{\nu_{r_{m-c}}}}}

\NewDocumentCommand\Tensor{mmggg}{
#1 \otimes #2 \IfNoValueTF{#3}{}{\otimes #3} \IfNoValueTF{#4}{}{\otimes #4} \IfNoValueTF{#5}{}{\otimes #5}
}

\newtheorem{theorem}{Theorem}

\newtheorem{algorithm}[theorem]{Algorithm}

\newtheorem{definition}[theorem]{Definition}

\newtheorem{lemma}[theorem]{Lemma}

\newtheorem{remark}[theorem]{Remark}

\numberwithin{equation}{section}

\frenchspacing
\allowdisplaybreaks

\bibliographystyle{JHEP}

\title{Complete integration-by-parts reductions of the non-planar
  hexagon-box via module intersections}
\author[a]{Janko B{\"o}hm,}
\author[b]{Alessandro Georgoudis,}
\author[c]{Kasper J. Larsen,}
\author[a]{Hans Sch{\"o}nemann,}
\author[d,e,f]{Yang~Zhang}
\affiliation[a]{Department of Mathematics, University of Kaiserslautern, 67663 Kaiserslautern, Germany}
\affiliation[b]{Department of Physics and Astronomy, Uppsala University, SE-75108 Uppsala, Sweden}
\affiliation[c]{School of Physics and Astronomy, University of Southampton, Highfield, Southampton, SO17 1BJ, United Kingdom}
\affiliation[d]{PRISMA Cluster of Excellence, Johannes Gutenberg
  University, 55128 Mainz, Germany}
\affiliation[e]{Institute for Theoretical Physics, ETH Z\"urich, CH
  8093 Z\"urich, Switzerland}
\affiliation[f]{Kavli Institute for Theoretical Physics,
University of California, Santa Barbara, CA 93106, USA}
\emailAdd{boehm@mathematik.uni-kl.de}
\emailAdd{Alessandro.Georgoudis@physics.uu.se}
\emailAdd{Kasper.Larsen@soton.ac.uk}
\emailAdd{hannes@mathematik.uni-kl.de}
\emailAdd{zhang@uni-mainz.de}

\abstract{We present the powerful module-intersection integration-by-parts (IBP)
  method, suitable for multi-loop and multi-scale Feynman
  integral reduction. Utilizing modern computational algebraic
  geometry techniques, this new method successfully trims traditional IBP systems dramatically to
  much simpler integral-relation systems on unitarity cuts. We
  demonstrate the power of this method by explicitly carrying out
  the complete analytic reduction of two-loop five-point non-planar
  hexagon-box integrals, with degree-four numerators, to a basis of
  $73$ master integrals.}

\keywords{Scattering Amplitudes, QCD, Computational Algebraic Geometry}

\begin{document}

\begin{flushright}\begin{tabular}{r}
MITP/18-034 \\
UUITP-16/18
\end{tabular}\end{flushright}
\vspace{-14.1mm}
\maketitle

\clearpage

\section{Introduction}

High precision in the theoretical predictions for cross sections is
necessary for the analysis of the Large
Hadron Collider (LHC) Run II experiments. For many processes, the
next-to-next-to-leading-order (NNLO) contributions are needed for
precision-level predictions. NNLO in general requires the computation of a
large number of two-loop (or, in some cases, higher-loop) Feynman integrals,
which is often a bottleneck problem for particle physics.

Integration-by-parts (IBP) reduction is a key tool to reduce the large
number of multi-loop Feynman integrals to a small integral basis of
so-called master integrals. Schematically, for an $L$-loop integral in
dimensional regularization with the inverse propagators $D_1, \ldots, D_m$, we have
\begin{equation}
0=\int \frac{\d^D \ell_1}{\mathrm{i} \pi^{D/2}} \ldots \frac{\d^D \ell_L}{\mathrm{i} \pi^{D/2}}
\sum_{j=1}^L \frac{\partial}{\partial \ell_j^\mu}
\frac{v_j^\mu \hspace{0.5mm} }{D_1^{\nu_1} \cdots D_m^{\nu_m}}
\,, \label{eq:IBP_schematic}
\end{equation}
where the $v_j^\mu$ are vectors constructed from the external momenta and the loop
momenta. Expanding the action of the derivative in eq.~\eqref{eq:IBP_schematic} leads to an
IBP identity, which is a linear relation between multi-loop Feynman integrals. After collecting
a sufficient set of IBP identities, one can carry out reduced row reduction in a
specific integral ordering to express a target set of integrals as a linear combination of the master integrals.

The row reduction of IBP identities can be achieved by
the Laporta algorithm~\cite{Laporta:2000dc,Laporta:2001dd}. There are several publicly available implementations
of IBP reductions: AIR~\cite{Anastasiou:2004vj},
FIRE~\cite{Smirnov:2008iw,Smirnov:2014hma},
Reduze~\cite{Studerus:2009ye,vonManteuffel:2012np},
LiteRed~\cite{Lee:2012cn}, Kira~\cite{Maierhoefer:2017hyi},
as well as various private implementations. In the standard Laporta algorithm and
many of its variants, integrals with doubled propagators appear in the intermediate
steps and also in the master integral basis. To trim the linear
system of IBP identities for row reduction, one may impose the condition
that no integrals with doubled propagators appear in the IBP identities \cite{Gluza:2010ws}. This condition can be
solved by the syzygy computation \cite{Gluza:2010ws} or linear algebra \cite{Schabinger:2011dz}.
In refs.~\cite{Ita:2015tya, Larsen:2015ped}, methods for deriving IBP reductions on
generalized-unitarity cuts was introduced. IBP reduction can also be
sped up by the finite-field sampling method~\cite{vonManteuffel:2014ixa}.
For several multiple-scale Feynman integrals, inspired choices of IBP generating
vectors $v_j^\mu$, leading to simple IBP identities, can be obtained from dual conformal
symmetry \cite{Bern:2017gdk}. In a recent development, IBP identities
which target integrals with arbitrary-degree numerators
\cite{Kosower:2018obg} were introduced. An initial (and simpler) step
to carrying out the IBP reductions is that of determining the integral basis itself.
This can be done by the packages {\sc Mint}
\cite{Lee:2013hzt} or {\sc Azurite} \cite{Georgoudis:2016wff}.
The dimension of the integral basis can be efficiently determined by the computation of
an Euler characteristic which naturally arises in the
D-module theory of polynomial annihilators \cite{Bitoun:2017nre}.

IBP reductions, combined with generalized-unitarity methods
\cite{Britto:2004nc,Britto:2005fq,Kosower:2011ty,Johansson:2012zv} or integrand
reduction methods \cite{Ossola:2006us, Badger:2012dp, Zhang:2012ce, Mastrolia:2012an},
have led to successful computations of many complicated
multi-loop integrands. Furthermore, IBP reductions are also important for
setting up the differential equations for Feynman integrals~\cite{Kotikov:1990kg,Kotikov:1991pm,Bern:1993kr,Remiddi:1997ny,Gehrmann:1999as,Henn:2013pwa,%
Papadopoulos:2014lla,Lee:2014ioa,Ablinger:2015tua,Papadopoulos:2015jft,Liu:2017jxz}, which has proven
a highly efficient method for evaluating Feynman integrals. Recent developments in
multi-loop integral reduction and the differential equation method have produced very
impressive results for the two-loop amplitudes of $2\to 3$ scattering processes
with the
all-plus-helicity configuration
\cite{Badger:2013gxa,Badger:2015lda,Gehrmann:2015bfy} and the generic-helicity configuration
\cite{Badger:2017jhb,Abreu:2017hqn,Boels:2018nrr}.
IBP reductions are also crucial for deriving dimension recursion relations
\cite{Tarasov:1996br,Lee:2009dh,Lee:2010wea}.

For multi-loop, high-multiplicity or multi-scale amplitude computations, the IBP reductions
frequently become the bottleneck which requires extensive computing resources and
time. Schematically, the complexity of IBP reductions mainly originate from the following facts,
\begin{enumerate}
\item Multi-loop IBP identities involve many contributing integrals, and the
  complexity of row reduction grows quickly with the number of integrals.
\item There are many external invariants (i.e., Mandelstam variables and mass parameters),
  and the algebraic computation of polynomials or rational functions in
  these parameters requires a substantial amount of CPU time and RAM.
\end{enumerate}

In this paper, utilizing our new powerful module-intersection IBP reduction
method, we present solutions to the above problems:
\begin{enumerate}
\item With the module-intersection computation \cite{Zhang:2016kfo}, we dramatically reduce
  the number of relevant IBP identities and integrals involved by
  imposing the no-doubled-propagator condition and applying unitarity cuts.
\item With the classical technique of treating parameters as new variables
  and imposing a special monomial ordering, originating from primary
  decomposition algorithms \cite{GTZ} in commutative algebra, we can efficiently compute analytic IBP reductions
  with many parameters.
\end{enumerate}
In this paper, we demonstrate our method by treating the cutting-edge example
of the analytic IBP reduction of two-loop five-point non-planar
hexagon-box integrals, with all degree-$1,2,3,4$ numerators, to the basis of
$73$ master integrals. This computation was carried out using private implementations
in {\sc Singular} \cite{Singular} and {\sc Mathematica} of the ideas
presented in this paper. Our result of fully analytic IBP reductions can be
downloaded from,
\begin{quote}
  \url{https://github.com/yzhphy/hexagonbox_reduction/releases/download/1.0.0/hexagon_box_degree_4_Final.zip} \,.
\end{quote}

This paper is organized as follows. In section~\ref{section:IBP_module_intersection}
we present our module-intersection method for trimming IBP systems and describe
a simple way of obtaining the individual modules needed for our
method. In section~\ref{section_intersection}, the central part of our
paper, we present a highly efficient new method for computing these
intersections. We then explain the technical details of performing
row reduction for the trimmed IBP systems from module intersections
in section~\ref{section_REF}. In section~\ref{section_hexagon_box}, we
present our example, the module-intersection IBP reduction of hexagon-box
integrals, in details. The conclusions and outlook are given in section~\ref{section_conclusion}.

\section{Simplification of IBP systems by module-intersection method}
\label{section:IBP_module_intersection}
In this section we present the details on trimming IBP systems by
requiring the absence of integrals with doubled propagators \cite{Gluza:2010ws},
and by furthermore applying unitarity cuts. To demonstrate our method clearly,
we first explain how to efficiently impose the no-doubled-propagator condition
without applying unitarity cuts. Then we show that the same approach can be applied on
unitarity cuts. In both cases, the algebraic constraints for trimming
IBP systems are reformulated as the problem of computing the intersection of two modules
over a polynomial ring \cite{Zhang:2016kfo}. At the end of this section,
the algorithm for computing each individual module is introduced,
while the intersection algorithm, the heart of this method,
will be introduced in section~\ref{section_intersection}.

\subsection{Module intersection method without unitarity cuts}
\label{section:Module_intersection_no_cut}
The objects under consideration are the multi-loop Feynman integrals,
\begin{equation}
  I(\nu_1,\ldots, \nu_m) = \int \frac{\d^D l_1}{i \pi^{D/2}} \ldots
  \frac{\d^D l_L}{i \pi^{D/2}}  \frac{1}{D_1^{\nu_1} \cdots
    D_m^{\nu_m}}\,.
\label{Feynman_integral_1}
\end{equation}
Here $\nu_i \in \mathbb{Z}$, $l_1, \ldots, l_L$ denote the loop momenta, and $p_1, \ldots, p_E$ the {\it independent} external momenta.
By the procedure of integrand reduction, we can set
$m=L(L+1)/2+LE$. The $D_i$ denote inverse propagators.

To study IBP reduction of integrals of the form (\ref{Feynman_integral_1}), we focus on the family of Feynman integrals associated with a particular Feynman diagram with $k$ propagators (where $k\leq m$) and all its daughter diagrams, obtained by pinching propagators. Without loss of generality, the inverse denominators of this Feynman diagram can be labeled as $D_1, \ldots, D_k$. Therefore, the family of integrals is,
\begin{equation}
  \label{eq:3}
 \mathcal F=\{ I(\nu_1,\ldots, \nu_m) \hspace{1mm}| \quad \nu_{j} \leq 0 \text{ if } j>k \} \,.
\end{equation}

Traditionally, IBP reduction is carried out within such a family.
However, since the majority of the Feynman integrals that contribute to a quantity in perturbative QFT
are integrals without doubled propagators, it is natural to consider the subfamily
\begin{equation}
  \label{eq:3}
 \mathcal F_\mathrm{ndp}= \{I(\nu_1,\ldots, \nu_m) \hspace{1mm}| \quad  \nu_{j} \leq 1 \text{ if } j\leq k,\quad \nu_{j} \leq 0 \text{ if } j>k  \}
\end{equation}
and the IBP relations for integrals in this subfamily \cite{Gluza:2010ws}.

We find that it is convenient to use the Baikov representation
\cite{Baikov:1996rk} of Feynman integrals for their integration-by-parts (IBP) reduction for several reasons: 1) the integrand reduction is manifest in this representation, 2) it is easy to apply unitarity cuts, 3) most importantly, it is surprisingly simple to trim IBP systems {\it analytically} in this representation. Here we briefly review the Baikov representation.

We collect the external and internal momenta as,
\begin{equation}
V = ( v_1, \ldots, v_{E+L}) = ( p_1, \ldots, p_E, \ell_1, \ldots, \ell_L ) \,.
\label{eq:list_of_momenta}
\end{equation}
The Gram matrix $S$ of these vectors is,
\begin{equation}
S =
\left(
 \begin{array}{ccc|ccc}
  x_{1,1}  &  \compactspace \cdots \compactspace  &  \compactspace x_{1,E}    &  x_{1,E+1}    &  \compactspace \cdots \compactspace  &  \compactspace x_{1,E+L} \\
  \vdots   &  \compactspace \ddots \compactspace  &  \compactspace \vdots     &  \vdots       &  \compactspace \ddots \compactspace  &  \compactspace \vdots  \\
  x_{E,1}  &  \compactspace \cdots \compactspace  &  \compactspace x_{E,E}    &  x_{E,E+1}    &  \compactspace \cdots \compactspace  &  \compactspace x_{E,E+L} \\
\hline
 x_{E+1,1} &  \compactspace \cdots \compactspace  &  \compactspace x_{E+1,E}  &  x_{E+1,E+1}  &  \compactspace \cdots \compactspace  &  \compactspace x_{E+1,E+L} \\
 \vdots    &  \compactspace \ddots \compactspace  &  \compactspace \vdots     &  \vdots       &  \compactspace \ddots \compactspace  &  \compactspace \vdots  \\
 x_{E+L,1} &  \compactspace \cdots \compactspace  &  \compactspace x_{E+L,E}  &  x_{E+L,E+1}  &  \compactspace \cdots \compactspace  &  \compactspace x_{E+L,E+L}
\end{array} \right) \,,
\label{eq:extended_Gram_matrix}
\end{equation}
where the elements are defined as $x_{i,j} = v_i \cdot v_j$. The
upper-left $E\times E$ block is the Gram matrix of the external momenta,
which is denoted as $G$. Defining
$z_i\equiv D_i$ and integrating out solid-angle directions, the Feynman
integral (\ref{Feynman_integral_1}) takes the following form in Baikov representation,
\begin{align}
I(\nu_1,\ldots, \nu_m)  \hspace{0.5mm}&=\hspace{0.5mm}  C_E^L \hspace{0.7mm} U^\frac{E-D+1}{2} \hspace{-1mm}
\int \d z_1 \cdots \d z_m P^\frac{D-L-E-1}{2} \frac{1}{z_1^{\nu_1} \cdots z_m^{\nu_m}}  \,,
\label{eq:Baikov_representation}
\end{align}
where $P\equiv \det S$, $U\equiv \det G$, and the factor $C_E^L$ originates from
the solid-angle integration and the Jacobian for the transformation $x_{i,j} \to z$. For
the derivation of IBP identities, $U$ and $C_E^L$ are irrelevant, so we
may ignore them in the following discussion. Note that in this
representation, the inverse denominators $D_i$ become free variables,
and so the integrand reduction can be done automatically.

An IBP identity in this representation reads,
\begin{align}
  0&=\int \d z_1 \cdots \d z_m \sum_{i=1}^m\frac{\partial}{\partial
     z_i} \bigg(a_i P^\frac{D-L-E-1}{2} \frac{1}{z_1^{\nu_1} \cdots
     z_m^{\nu_m}}  \bigg) \nonumber \\
&=\int \d z_1 \cdots \d z_m \sum_{i=1}^m\bigg(\frac{\partial a_i}{\partial
     z_i} + \frac{D-L-E-1}{2 P} a_i \frac{\partial P}{\partial
     z_i} - \frac{\nu_i a_i}{z_i}\bigg) P^\frac{D-L-E-1}{2} \frac{1}{z_1^{\nu_1} \cdots
     z_m^{\nu_m}} \,.
\label{IBP_Baikov}
\end{align}
Here the $a_i$ denote polynomials in the ring $\mathbb A=\mathbb Q(\mathbf
s)[z_1,\ldots, z_m]$. 
 (We use $\mathbf s$ to represent the
independent Mandelstam variables and mass parameters collectively.)
We remark that in the Baikov representation, the Baikov polynomial $P$
vanishes on the boundary of the integration domain, and hence there is no
surface term in the IBP identity. Note that the terms with the pole
$1/P$ appearing inside the parenthesis in eq.~(\ref{IBP_Baikov})
correspond to dimension-shifted integrals, and as such are not
favorable in deriving simple IBP relations. To avoid such poles, we may impose the
following constraints on the
$a_i$ \cite{Lee:2014tja, Ita:2015tya, Larsen:2015ped},
\begin{equation}
  \bigg(\sum_{i=1}^m a_i \frac{\partial P}{\partial z_i} \bigg)+ b
  P=0\,,
\label{module1}
\end{equation}
where $b$ is also required to be a polynomial in $\mathbb A$. This
constraint is known in computational commutative algebra as a
``syzygy'' equation \cite{Gluza:2010ws}. In the following discussion, we only focus on
the polynomials $a_i$, since once they are known it is straightforward
to recover the polynomial $b$. The solutions of eq.~\eqref{module1}, taking the form,
\begin{equation}
  (a_1,\ldots, a_m)
\end{equation}
form a sub-module of the polynomial module $\mathbb A^{m}$, which we denote $M_1$
in the following.

Furthermore, to trim the IBP system, it is possible to work with
integrals in $\mathcal F_\mathrm{ndp}$, i.e.~integrals without doubled propagators \cite{Gluza:2010ws}. Note
from the second line of eq.~\eqref{IBP_Baikov} that, even if the integral inside the differential
operator has $\nu_i\leq 1$, the derivative will produce integrals
with doubled propagators. This can also be prevented by a suitable
choice of the $a_i$. For example, if we consider the integral of the parent
diagram, $\nu_1=\cdots=\nu_k=1$, $\nu_{k+1}\leq 0,\ldots, \nu_m\leq 0$
inside eq.~\eqref{IBP_Baikov}, we can avoid doubled propagators
 by requiring that $a_i$ is divisible by $z_i$,
\begin{equation}
  a_i = b_i z_i \,,\quad i=1,\ldots, k \,.
\label{module2}
\end{equation}
Such $(a_1,\ldots, a_m)$ also form a sub-module of $\mathbb A^m$, which we denote
$M_2$. We need to solve eqs.~\eqref{module1} and \eqref{module2}
simultaneously to find IBP relations which involve neither dimension shifts
nor doubled propagators in Baikov representation.

The strategy in ref.~\cite{Larsen:2015ped} to solve
these conditions is to replace $a_i$ by $b_i$ in eq.~\eqref{module1} and
then solve for $(b_1, \ldots, b_k, a_{k+1},\ldots, a_m, b)$ as a single syzygy equation,
employing Schreyer's theorem. However, this approach, although it works well for
simple two-loop four-point integrals, becomes less practical for more
complicated kinematics. It was suggested in the reference \cite{Zhang:2016kfo}
that a better strategy is to determine the generators of $M_1$ and
$M_2$ individually, and then calculate the {\it module intersection},
\begin{equation}
  \label{eq:6}
  M_1\cap M_2 \,,
\end{equation}
whose generators are solutions of eqs.~\eqref{module1} and
\eqref{module2}. Geometrically, elements in  $M_1\cap M_2$ are the
polynomial tangent vectors of the reducible hypersurface
\cite{Hauser1993,Zhang:2016kfo}
\begin{equation}
  \label{eq:12}
  z_1 \ldots z_k P=0\,.
\end{equation}

In subsection
\ref{subsection_individual_module}, we will see that it takes no effort to find
the generators of $M_1$ and $M_2$. In section \ref{section_intersection}
we present a highly efficient algorithm for computing the intersection $M_1\cap M_2$.

Once eqs.~\eqref{module1} and \eqref{module2} are solved, we obtain the
simplified IBP identities without doubled propagators. They take the following form,
\begin{equation}
  \label{eq:2}
  0=\int \d z_1 \cdots \d z_m \bigg(\sum_{i=1}^m \frac{\partial a_i}{\partial
     z_i} - \sum_{i=1}^k  b_i  -\sum_{i=k+1}^m \frac{\nu_i a_i}{z_i} - \frac{D-L-E-1}{2} b \bigg) \frac{P^\frac{D-L-E-1}{2}}{z_1\ldots
    z_k  z_{k+1}^{\nu_{k+1}} \ldots z_m^{\nu_m}}\,.
\end{equation}
Once the generators of $M_1\cap M_2$ are obtained, we multiply them by monomials
in the $z_i$ in order to get a spanning set of IBP identities for the reduction targets.
Alternatively, it is possible to apply D-module theory to get the IBP reductions directly from
the generators of $M_1\cap M_2$. We leave this direction for future research.

\subsection{Module intersection method with cuts}
\label{subsection_module_intersection_on_cuts}

In practice, for Feynman diagrams with high multiplicity or high loop
order, instead of working with the Feynman integrals directly, it is
more convenient to apply unitarity cuts \cite{Ita:2015tya,Larsen:2015ped}.
In this section we show how to apply the module intersection method
in combination with unitarity cuts, thus simplifying the construction
and subsequent Gaussian elimination of the IBP identities.
We follow the notation of ref.~\cite{Boehm:2017wjc}.

Consider the $c$-fold cut of eq.~\eqref{Feynman_integral} with
$c\leq k$. Let $\mathcal{S}_\mathrm{cut}$, $\mathcal{S}_\mathrm{uncut}$
and $\mathcal{S}_\mathrm{ISP}$ denote the sets of indices of cut propagators,
uncut propagators and irreducible scalar products (ISP) respectively. Explicitly,
\begin{align}
\mathcal{S}_\mathrm{cut}    =  \{ \zeta_1, \ldots, \zeta_c \}\,, \quad
  \mathcal{S}_\mathrm{uncut} =  \{ r_1, \ldots, r_{k-c}\} \,, \quad
\mathcal{S}_\mathrm{ISP}    =  \{ r_{k-c+1}, \ldots, r_{m-c}\}  \,.
\end{align}
In Baikov represention, the $c$-fold cut of integrals with $\nu_1 = \cdots =\nu_k=1$
takes a simple form,
\begin{align}
I(\nu_1,\ldots, \nu_m)  \hspace{0.5mm}&\propto
\int \d z_{r_1} \cdots \d z_{r_{m-c}} \cutkernel \,,
\label{eq:Baikov_representation_cut}
\end{align}
where,
\begin{equation}
  \label{eq:4}
  \tilde P \hspace{0.8mm}=\hspace{0.8mm} P |_{z_{\zeta_1}\to 0,\ldots, z_{\zeta_c}\to 0}\,.
\end{equation}
We can derive IBP identities for the integrals on the cut $\mathcal{S}_\mathrm{cut}$,
\begin{align}
  0&=\int \d z_{r_1} \ldots \d z_{r_{m-c}} \sum_{i=1}^{m-c}\frac{\partial}{\partial
     z_{r_i}} \bigg(\tilde a_{r_i} \cutkernel \bigg) \nonumber \\
&=\int \d z_{r_1} \ldots \d z_{r_{m-c}}
  \bigg(\sum_{i=1}^{m-c}\frac{\partial \tilde a_{r_i}}{\partial
     z_{r_i}} + \frac{D-L-E-1}{2 \tilde P} \sum_{i=1}^{m-c} \tilde a_{r_i}
  \frac{\partial \tilde P}{\partial
     z_{r_i}} - \sum_{i=1}^{k-c} \frac{\tilde a_{r_i}}{z_{r_i}}
  -\sum_{i=k-c+1}^{m-c}\frac{\nu_{r_i} \tilde a_{r_i}}{z_{r_i}} \bigg) \nonumber \\
& \hspace{12mm} \times \cutkernel \,,
\label{IBP_Baikov_cut}
\end{align}
where $\tilde a_{r_i}$, $i=1, \ldots, m-c$ are polynomials in $\tilde{ \mathbb A}=\mathbb Q(\mathbf
s)[z_{r_1},\ldots z_{r_{m-c}}]$. Once again, to derive simple IBP relations
without dimension shifts or doubled propagators
\cite{Ita:2015tya,Larsen:2015ped}, we may impose the conditions,
\begin{align}
  \Big(\sum_{i=1}^{m-c} \tilde a_{r_i} \frac{\partial \tilde
  P}{\partial z_{r_i}} \Big) + \tilde b \tilde P &=0 \,, \label{cut_constraint_1_original}\\
\tilde a_{r_i} &= \tilde b_{r_i} z_{r_i},\quad i=1,\ldots, k-c \,.
\label{cut_constraint_2_original}
\end{align}
Again, the module intersection method provides an efficient tool to solve
these constraints simultaneously. However, for the purpose of automation, it
is useful to make use of the following slight reformulation. Define
\begin{align}
  \label{reformulation}
  \tilde a_{\zeta_i}\equiv 0\,, \quad i=1, \ldots, c\,,
\end{align}
Then eqs.~\eqref{cut_constraint_1_original} and \eqref{cut_constraint_2_original}
are formally recast as,
\begin{align}
  \Big(\sum_{i=1}^{m} \tilde a_i \frac{\partial \tilde
  P}{\partial z_i} \Big) + \tilde b \tilde P &=0\,, \label{cut_constraint_1}\\
\tilde a_{r_i} &= \tilde b_{r_i} z_{r_i}\,,\quad i=1,\ldots, k-c \,.
\label{cut_constraint_2}
\end{align}
By comparing these equations and their counterparts without applied cuts,
we observe the following shortcut. Recall that the modules $M_1$ and $M_2$ defined
in the previous subsection are the solution sets for eqs.~\eqref{module1}
and \eqref{module2} respectively. We define
\begin{equation}
\tilde M_1 =M_1 |_{z_{\zeta_1}\to 0,\ldots ,z_{\zeta_c}\to 0}\,,\quad \tilde M_2 =M_2 |_{z_{\zeta_1}\to 0,\ldots ,z_{\zeta_c}\to 0} \,,
  \label{module_cut}
  \end{equation}
whereby the $c$-fold cut on the elements of $M_1$ and $M_2$ has been imposed.
Then clearly,
\begin{equation}
  \label{module_intersection_cut}
  \tilde M_1 \cap \tilde M_2
\end{equation}
solve the equations \eqref{cut_constraint_1} and
\eqref{cut_constraint_2} simultaneously. Note that any element in
$(q_1,\ldots, q_m)\in M_2$ has the property $q_{\zeta_i}= h_{\zeta_i}
z_{\zeta_i}$, $i=1,\ldots, c$. After imposing the cut we have,
\begin{equation}
  \label{eq:5}
  q_{\zeta_i} \big|_{z_{\zeta_1}\to 0,\ldots ,z_{\zeta_c}\to 0} =0\,,
\end{equation}
which is consistent with the requirement \eqref{reformulation}. This
automates the computation: to obtain $\tilde M_1$ and $\tilde M_2$,
we first compute $M_1$ and $M_2$, and then simply apply the rules for the various cuts.
An algorithm for computation of the intersection
will be introduced in section~\ref{section_intersection}.

Once the constraints \eqref{cut_constraint_1} and
\eqref{cut_constraint_2} are solved, the IBP identities on the cut take the
simple form,
\begin{gather}
  0 =\int \d z_{r_1} \cdots \d z_{r_{m-c}}\
  \bigg(\sum_{i=1}^{m-c}\frac{\partial \tilde a_{r_i}}{\partial
     z_{r_i}} - \frac{D-L-E-1}{2 } \tilde b- \sum_{i=1}^{k-c}  \tilde b_{r_i}
  -\sum_{i=k-c+1}^{m-c}\frac{\nu_{r_i} \tilde a_{r_i}}{z_{r_i}} \bigg)
  \nonumber \\
 \times \cutkernel\,.
\label{IBP_cut}
\end{gather}
As for the uncut case, once the generators of $\tilde M_1\cap
\tilde M_2$ are obtained, we multiply them by monomials in the $z_{r_i}$,
$i=1,\ldots, m-c$ to get a spanning set of IBP identities.

The cuts necessary for reconstructing the complete IBP identities can be
determined from a list of master integrals \cite{Larsen:2015ped,Georgoudis:2016wff}:
they are the maximal cuts of ``uncollapsible'' master integrals
(master integrals which cannot be obtained from other integrals in the
basis by adding propagators). In
practice, the list of master integrals can be quickly determined by the packages
{\sc Mint} \cite{Lee:2017oca}, {\sc Azurite}
\cite{Georgoudis:2016wff}. The total number of master integrals can
also be determined by the D-module theory method of ref.~\cite{Bitoun:2017nre}.

Upon applying Gaussian elimination to the IBP identities evaluated on each cut,
we can subsequently merge the obtained reductions to find the complete IBP reductions
without applied cuts.

\subsection{Algorithm for computing individual modules}
\label{subsection_individual_module}

The modules $M_1$ and $M_2$ defined in subsection \ref{section:Module_intersection_no_cut},
and hence $\tilde M_1$ and $\tilde M_2$ defined in the previous subsection,
can all be determined without effort.

The condition \eqref{module1} for $M_1$ is a syzygy equation for the
Baikov polynomial $P$ and its derivatives. Schreyer's theorem
\cite{MR1322960} guarantees that solutions for syzygy equations
can be obtained from Gr\"obner basis computations.
However, for the Baikov polynomial $P$ this is not necessary,
owing to the special structure of $P$. A convenient way to find
the solution, or equivalently the tangent vectors for the hypersurface $P=0$,
is to use the basic canonical IBP identities \cite{Ita:2015tya}.  Here alternatively, we use the Laplace
expansion method \cite{Boehm:2017wjc} \footnote{We thank Roman Lee for
introducing us to the Laplace expansion relations of symmetric matrices, also explained at his website
\url{http://mathsketches.blogspot.ru/2010/07/blog-post.html} (in Russian).} of the Gram
determinant $S$ \eqref{eq:extended_Gram_matrix} to determine $M_1$, since the results are
manifestly expressed in the $z$ variables.

Laplace expansion of $P=\det S$ yields,
\begin{equation}
\bigg(\sum_{k=1}^{E+L} (1{+}\delta_{i,k}) x_{j,k} \frac{\partial
  P}{\partial x_{i,k}}\bigg)- 2\delta_{i,j} P = 0 \,,
\label{eq:Laplace_expansion_of_Baikov_polynomial}
\end{equation}
where $E+1\leq i \leq E+L$ and $1\leq j \leq E+L$. These $L(E+L)$
relations are syzygy relations between $P$ and its derivatives in the $x_{i,j}$
variables. It is straightforward to convert them to solutions of
eq.~\eqref{module1},
\begin{equation}
\sum_{\alpha=1}^m (a_{i,j})_\alpha \frac{\partial P}{\partial z_\alpha} + b_{i,j} P = 0 \,,
\end{equation}
where $a_{i,j}$ and $b$ are obtained by applying the chain rule to
the expressions in eq.~\eqref{eq:Laplace_expansion_of_Baikov_polynomial}.
Explicitly, they are given by,
\begin{equation}
(a_{i,j})_\alpha =  \sum_{k=1}^{E+L} (1 + \delta_{i,k}) \frac{\partial z_\alpha}{\partial x_{i,k}} x_{j,k}
\hspace{5mm} \mathrm{and} \hspace{5mm}
b_{i,j}          = -2\delta_{i,j} \,.
\label{M1_generators}
\end{equation}
It is proven in ref.~\cite{Boehm:2017wjc} via J{\'o}zefiak complexes \cite{Joz78} 
that the $L(E+L)$ tuples of polynomials $(a_{i,j})_\alpha$ form 
a complete generating set of $M_1$.

We remark that,
\begin{enumerate}
  \item The generating set (\ref{M1_generators}) is at most linear in the $z_i$.

  \item The generating set (\ref{M1_generators}) is homogenous in the $z_i$
  {\it and} the Mandelstam variables/mass parameters, as can be inferred from dimensional analysis.
\end{enumerate}
The second property is crucial for our highly efficient algorithm for computing
intersections of modules, to be explained in section~\ref{section_intersection}.

The generating set for $M_2$ is trivial. There are $m$ generators,
\begin{align}
  z_1 \mathbf{e}_1 \,, \quad \ldots, \quad  z_k \mathbf{e}_k,\quad
  \mathbf{e}_{k+1}\,,\quad
    \ldots, \quad   \mathbf{e}_{m} \,.
\label{M2_generators}
\end{align}
Here $\mathbf{e}_j$ is the $j$-th $m$-dimensional unit vector.

The cut cases, $\tilde M_1$ and $\tilde M_2$ defined in
eq.~\eqref{module_cut}, can then be obtained from eqs.~\eqref{M1_generators} and
\eqref{M2_generators} by simply setting $z_{\zeta_i} \to 0$,
$i=1,\ldots, c$.

\section{Determining the module intersection}
\label{section_intersection}
In this section, we introduce a highly efficient algorithm for computing generators of the module intersections and an algorithm to trim the generating sets.

Given the two submodules $M_{1},M_{2}\subset A^{t}$ over the multivariate
polynomial ring $A=F[z_{1},\ldots,z_{n}]$ with coefficients in the
multivariate rational function field $F=\mathbb{Q}(c_{1},\ldots,c_{r})$, our
goal is to obtain a generating system of the module $M_{1}\cap M_{2}$ (which
is finitely generated since $A$ is Noetherian). We apply Gr\"{o}bner basis
techniques to address this problem. Our algorithms are implemented in the
computer algebra system \textsc{Singular}, which focuses on polynomial
computations with applications in commutative algebra and algebraic geometry
\cite{Singular}.

We first recall some terminology: Using the notation $z^{\alpha}=z_{1}%
^{\alpha_{1}}\cdot\ldots\cdot z_{n}^{\alpha_{n}}$ for the monomials in $A$, we
call $z^{\alpha}e_{i}$ with a unit basis vector $e_{i}\in A^{t}$ a
\emph{monomial} of $A^{t}$. An $F$-multiple of a monomial is called a
\emph{term}. A \emph{monomial ordering} on $A^{t}$ is an total ordering $>$ on
the set of monomials in $A^{t}$, which respects multiplication, that is,
$z^{\alpha}e_{i}>z^{\beta}e_{j}$ implies $z^{\alpha}z^{\gamma}e_{i}>z^{\beta
}z^{\gamma}e_{j}$ for all $\alpha,\beta,\gamma,i,j$, and $z^{\alpha}%
e_{i}>z^{\beta}e_{i}\Leftrightarrow z^{\alpha}e_{j}>z^{\beta}e_{j}$ for all
$\alpha,\beta,i,j$. Then $>$ induces a monomial ordering on the monomials of
$A$, which we again denote by $>$. In turn, any monomial ordering $>$ on $A$
induces two canonical monomial orderings on $A^{t}$, \emph{position over term}%
\begin{equation}
z^{\alpha}e_{i}>z^{\beta}e_{j}\ : \Longleftrightarrow \ i<j\text{ or (}i=j\text{
and }z^{\alpha}>z^{\beta}\text{)}%
\end{equation}
and analogously \emph{term over position}. We call a monomial ordering on
$A^{t}$ \emph{global} if the induced ordering on $A$ is global, that is, $z^{\alpha
}>1$ for all $\alpha$. Any $0\neq f\in A^{t}$ can be written as $f=c\cdot
z^{\alpha}e_{i}+g$ with $z^{\alpha}e_{i}>z^{\beta}e_{j}$ for all terms
$\widetilde{c}\cdot z^{\beta}e_{j}$ of $g\in A^{t}$. Then the term
$\operatorname{LT}_{>}(f)=c\cdot z^{\alpha}e_{i}$ is called the \emph{lead
term} of $f$, the constant $\operatorname{LC}_{>}(f)=c$ is called the
\emph{lead coefficient} of $f$, and $L_{>}(f)=z^{\alpha}e_{i}$ the \emph{lead
monomial} of $f$. The monomials of $A^{t}$ come with a natural partial order,
which we call \emph{divisibility}%
\begin{equation}
z^{\alpha}e_{i}\mid z^{\beta}e_{j}\ \Longleftrightarrow \ i=j\text{ and }%
z^{\alpha}\mid z^{\beta}.%
\end{equation}
For terms $c_{1}z^{\alpha}e_{i}$ and $c_{2}z^{\beta}e_{j}$ with $z^{\alpha
}e_{i}\mid z^{\beta}e_{j}$ we define their quotient as $\frac{c_{2}z^{\beta
}e_{j}}{c_{1}z^{\alpha}e_{i}}=\frac{c_{2}z^{\beta}}{c_{1}z^{\alpha}}\in A$.
Moreover,
we define the least common multiple $\operatorname{lcm}(z^{\alpha}e_{i},z^{\beta}e_{j})$ as zero if $i\neq j$, and as
$\operatorname{lcm}(z^{\alpha},z^{\beta})$ otherwise.
For a subset $G\subset A^{t}$, the \emph{leading module} $L(G)$ is the module
of all $A$-linear combinations of the lead monomials of the non-zero elements of
$G$.

By iteratively canceling the lead term of $f$ via multiples of lead terms of
the divisors in $G=\{g_{1},\ldots,g_{l}\}\subset A^{t}$ with respect to a fixed global ordering, we obtain a notion of
\emph{division with remainder} yielding a division expression
\begin{equation}
f=\sum_{i}a_{i}g_{i}+\operatorname*{NF}\nolimits_{>}(f,G)
\end{equation}
with $a_{i}\in A$ such that $\operatorname*{NF}(0,G)=0$, $\operatorname*{NF}%
_{>}(f,G)\neq0$ implies that $L_{>}(\operatorname*{NF}_{>}(f,G))\notin L(G)$,
and the lead monomial of $f$ is not smaller than that of any $a_{i}g_{i}$.

Let $U\subset A^{t}$ be a submodule and $>$ a global monomial ordering. A
finite set $0\notin G=\{g_{1},\ldots,g_{l}\}\subset U$ is called
\emph{Gr\"{o}bner basis }of $U$ with respect to $>$, if%
\begin{equation}
L_{>}(G)=L_{>}(U)\text{.}%
\end{equation}

\begin{theorem}
[Buchberger]With notation as above, the following conditions are equivalent:
\end{theorem}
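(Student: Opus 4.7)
The plan is to prove the standard equivalence theorem of Buchberger, which asserts that the following are equivalent for $G=\{g_1,\ldots,g_l\}\subset U$ with respect to the global ordering $>$: (i) $L_>(G)=L_>(U)$, (ii) every $f\in U$ satisfies $\operatorname{NF}_>(f,G)=0$, (iii) division with remainder is unique, i.e.\ $\operatorname{NF}_>(f,G)$ depends only on $f$ and not on the order in which divisors are picked, and (iv) the S-pair criterion holds, namely $\operatorname{NF}_>(\operatorname{spoly}(g_i,g_j),G)=0$ for all $i,j$, where $\operatorname{spoly}(g_i,g_j)=\tfrac{\operatorname{lcm}(L_>(g_i),L_>(g_j))}{\operatorname{LT}_>(g_i)}g_i-\tfrac{\operatorname{lcm}(L_>(g_i),L_>(g_j))}{\operatorname{LT}_>(g_j)}g_j$. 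The implications would be organized in a cycle (i)$\Rightarrow$(ii)$\Rightarrow$(iii)$\Rightarrow$(iv)$\Rightarrow$(i), using only the definitions, the division-with-remainder construction recalled in the excerpt, and Noetherianity of $A$.

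First I would handle the easy directions. For (i)$\Rightarrow$(ii): if $r=\operatorname{NF}_>(f,G)\neq0$, then since $f\in U$ and the reduction only subtracts $A$-multiples of elements of $G\subset U$, we have $r\in U$, so $L_>(r)\in L_>(U)=L_>(G)$, contradicting the defining property of the normal form. The step (ii)$\Rightarrow$(iii) is immediate: any two normal forms $r_1,r_2$ of $f$ differ by an element of $U$ which is itself its own normal form, and by (ii) this element must be zero. The step (iii)$\Rightarrow$(iv) is also immediate because $\operatorname{spoly}(g_i,g_j)\in U$ admits the trivial division expression with remainder $0$, so uniqueness forces every normal form to vanish.

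The substantive part is (iv)$\Rightarrow$(i), the so-called product criterion / cancellation lemma. Given $f\in U$, I would fix any representation $f=\sum_i a_i g_i$ and let $z^\delta e_\ell=\max_> L_>(a_i g_i)$ be the largest lead monomial occurring among the summands. If $z^\delta e_\ell=L_>(f)$, then $L_>(f)\in L_>(G)$ and we are done. Otherwise there must be cancellation among the summands with lead monomial equal to $z^\delta e_\ell$; I would group them and apply the syzygy identity that expresses such a telescoped sum as an $A$-linear combination of S-polynomials $\operatorname{spoly}(g_i,g_j)$ (over the pairs contributing to the cancellation) whose lead monomials are strictly smaller than $z^\delta e_\ell$. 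Substituting the assumption (iv), namely that each $\operatorname{spoly}(g_i,g_j)$ admits a standard representation with all lead monomials strictly below $\operatorname{lcm}(L_>(g_i),L_>(g_j))$, produces a new representation of $f$ with a strictly smaller maximum $z^{\delta'}e_{\ell'}<z^\delta e_\ell$. Since $>$ is a global well-ordering on monomials, iterating this procedure terminates after finitely many steps with a representation in which the maximum equals $L_>(f)$, yielding $L_>(f)\in L_>(G)$.

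The main obstacle is the cancellation lemma used in (iv)$\Rightarrow$(i): rewriting the telescoped sum $\sum_{i\in I} c_i\, z^{\delta-\alpha_i}e_\ell\cdot \text{(something)}$ as an $A$-linear combination of S-polynomials with the required strict bound on lead monomials. This is the standard telescoping identity (writing $\sum c_i u_i=\sum(c_1+\cdots+c_j)(u_j-u_{j+1})$ and noting that the total coefficient sum vanishes by the cancellation hypothesis), combined with the fact that each difference $u_j-u_{j+1}$ is proportional to an S-polynomial up to a monomial factor that exactly absorbs the denominators appearing in $\operatorname{spoly}$. Termination then follows from the well-ordering property of $>$ on $\mathbb{N}^n\times\{1,\ldots,t\}$, or equivalently from Dickson's lemma, which is the only place where the Noetherian nature of $A$ enters.
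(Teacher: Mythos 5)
The paper does not actually prove this theorem; it cites section~2.3 of~\cite{GP}. Your proposal supplies the standard textbook argument, which is the right idea, but two steps are not tight as written.

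Your step (iii)$\Rightarrow$(iv) fails in the form stated. You call
$\operatorname{spoly}(g_i,g_j)=\frac{\operatorname{lcm}(L_>(g_i),L_>(g_j))}{\operatorname{LT}_>(g_i)}g_i-\frac{\operatorname{lcm}(L_>(g_i),L_>(g_j))}{\operatorname{LT}_>(g_j)}g_j$
a ``trivial division expression with remainder $0$'' and then invoke uniqueness of the normal form. But the paper's definition of division with remainder requires that the lead monomial of the dividend be \emph{not smaller} than that of every summand $a_ig_i$, and by construction $L_>(\operatorname{spoly}(g_i,g_j))$ is \emph{strictly} smaller than $\operatorname{lcm}(L_>(g_i),L_>(g_j))$: the leading terms cancel, which is the whole point of the S-polynomial. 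So this is not a valid normal-form computation and (iii) cannot be applied to it. The clean fix is to go directly from (ii) to (iv): since $\operatorname{spoly}(g_i,g_j)\in\langle G\rangle\subseteq U$, (ii) gives $\operatorname{NF}_>(\operatorname{spoly}(g_i,g_j),G)=0$; and (ii) also forces $U=\langle G\rangle$, because $\operatorname{NF}_>(f,G)=0$ exhibits any $f\in U$ as an $A$-linear combination of $G$. Condition (iii), while true, is then not needed in the cycle and can be dropped.

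Relatedly, your (iv) states only the S-pair criterion, whereas the paper's condition~3 also includes $U=\langle G\rangle$. You use that hypothesis silently in (iv)$\Rightarrow$(i) when you ``fix any representation $f=\sum_ia_ig_i$'' of an arbitrary $f\in U$; without it the argument cannot start. With that hypothesis restored, your outline of (iv)$\Rightarrow$(i) --- telescoping the cancelling top-degree part into S-polynomials with strictly smaller lead monomials, substituting their standard representations from (iv), and terminating because a global monomial ordering is a well-ordering (Dickson's lemma) --- is the correct and standard substantive argument.
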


\begin{enumerate}
\item $G$ is a Gr\"{o}bner basis of $U$,

\item $f\in U\Longleftrightarrow\operatorname*{NF}_{>}(f,G)=0$,

\item $U=\left\langle G\right\rangle $ and $\operatorname*{NF}_{>}%
(\operatorname*{spoly}_{>}(g_{i},g_{j}),G)=0$ for all $i\neq j$, where%
\begin{equation}
\operatorname*{spoly}(f,g):=\frac{\operatorname{lcm}(L(f),L(g))}{LT(f)}%
f-\frac{\operatorname{lcm}(L(f),L(g))}{LT(g)}g\text{.}%
\end{equation}
is the so-called \emph{S-polynomial} (or \emph{syzygy polynomial}) of $f$ and $g$.

\end{enumerate}

For a proof of this standard fact, see for example section 2.3 of ref.~\cite{GP}. A
generating set $G$ of $U$ can be extended to a Gr\"{o}bner basis by means of
Buchberger's algorithm, which according to the above criterion computes the remainder
$r=\operatorname*{NF}_{>}(\operatorname*{spoly}_{>}(g_{i},g_{j}),G)$ for all $g_i, g_j$ in $G$, adds
$r$ to $G$ if $r\neq0$, and iterates this process with the updated $G$ until all remainders vanish. This process terminates, since $A^t$ is Noetherian and,
hence, any ascending chain of submodules becomes stationary~(section
2.1 of ref.~\cite{GP}). Along this process, we can determine all relations between the
$g_{i}$:

\begin{algorithm}
[Syzygies]\label{alg syz} Let $M_{G}=(g_{1},\ldots,g_{l})\in A^{t\times l}$.
If $H$ is a Gr\"{o}bner basis of the column space of%
\begin{equation}
\left(
\begin{tabular}
[c]{c}%
$M_{G}$\\\hline
$%
\begin{array}
[c]{ccc}%
1\hspace{2mm} &  & 0\\
& \ddots & \\
0 &  & \hspace{2mm}1
\end{array}
$%
\end{tabular}
\right)
\end{equation}
with regard to the position over term order, $h_{1},\ldots,h_{m}$ are the
elements of $H$ in $\bigoplus_{i=t+1}^{t+l}e_{i}$, and $\pi:A^{t+l}\rightarrow
A^{l}$ is the projection onto the last $l$ coordinates, then
\begin{equation}
\operatorname*{syz}(g_{1},\ldots,g_{l}):=\ker M_{G}%
\end{equation}
is generated as an $A$-module by $\pi(h_{1}),\ldots,\pi(h_{m})$.
\end{algorithm}

So $\operatorname*{syz}(g_{1},\ldots,g_{l})=\operatorname*{im}\left(
\pi(h_{1}),\ldots,\pi(h_{m})\right)  $ is the image of the matrix with the
$\pi(h_{i})$ in the columns, in particular, $M_G \cdot \left(
\pi(h_{1}),\ldots,\pi(h_{m})\right)=0$. For a proof of correctness of the algorithm, see
for example lemma 2.5.3 of ref.~\cite {GP}. We can use this algorithm to compute
module intersections:

\begin{lemma}
[Intersection]\label{lem int}Let $M_{1}=\left\langle v_{1},\ldots
,v_{l}\right\rangle $ and $M_{2}=\left\langle w_{1},\ldots,w_{p}\right\rangle
$ be submodules of $A^{t}$, and
\begin{equation}
\operatorname*{syz}(v_{1},\ldots,v_{l},w_{1},\ldots,w_{p})=\operatorname*{im}%
\left(
\begin{array}
[c]{c}%
G\\
H
\end{array}
\right)
\end{equation}
with $G=(g_{i,j})\in A^{l\times m}$ and $H=(h_{i,j})\in A^{p\times m}$ as
obtained from Algorithm \ref{alg syz}. Then the columns of
\begin{equation}
(v_{1},\ldots,v_{l})\cdot G\text{,}%
\end{equation}
that is, the vectors $\sum_{i=1}^{l}g_{i,j}v_{i}$ with $j=1,\ldots,m$ generate
$M_{1}\cap M_{2}$.
\end{lemma}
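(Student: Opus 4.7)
The plan is to prove both inclusions by exploiting the bijection between elements of $M_1 \cap M_2$ and syzygies of the combined generating set $(v_1,\ldots,v_l,w_1,\ldots,w_p)$. The key observation is that $x$ lies in $M_1 \cap M_2$ if and only if it admits two representations $x=\sum_i a_i v_i = \sum_j b_j w_j$, and subtracting these gives $\sum_i a_i v_i - \sum_j b_j w_j = 0$, i.e.~$(a_1,\ldots,a_l,-b_1,\ldots,-b_p)^T$ is a syzygy. Once this identification is in place, Algorithm \ref{alg syz} takes care of the rest.

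For the inclusion $(v_1,\ldots,v_l)\cdot G\subseteq M_1\cap M_2$, I would take the $j$-th column $\sum_{i=1}^l g_{i,j} v_i$. It is manifestly in $M_1$. To see it is also in $M_2$, I would invoke the defining property of the columns of $\binom{G}{H}$ as syzygies: for each $j$,
\begin{equation}
\sum_{i=1}^l g_{i,j} v_i + \sum_{k=1}^p h_{k,j} w_k = 0,
\end{equation}
so the column equals $-\sum_{k=1}^p h_{k,j} w_k \in M_2$.

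For the reverse inclusion, I would take an arbitrary $x \in M_1 \cap M_2$, pick representations $x=\sum_i a_i v_i$ and $x=\sum_j b_j w_j$, and observe that $(a,-b)^T \in \operatorname{syz}(v_1,\ldots,v_l,w_1,\ldots,w_p)$. By Algorithm \ref{alg syz}, this syzygy module is generated by the columns of $\binom{G}{H}$, so there exists $\lambda \in A^m$ with $a = G\lambda$ and $-b = H\lambda$. Consequently
\begin{equation}
x = \sum_{i=1}^l a_i v_i = (v_1,\ldots,v_l)\cdot a = \bigl((v_1,\ldots,v_l)\cdot G\bigr)\cdot \lambda,
\end{equation}
which exhibits $x$ as an $A$-linear combination of the columns of $(v_1,\ldots,v_l)\cdot G$.

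There is essentially no hard step here: the entire content of the lemma is the translation between ``intersection element'' and ``syzygy of the concatenated generators'', which is purely formal. The only place where a non-trivial result is used is in the $\supseteq$ direction, where one needs the correctness of Algorithm \ref{alg syz} to guarantee that every syzygy of the combined generating set lies in the image of $\binom{G}{H}$; but this has already been established earlier in the excerpt (with the reference to lemma 2.5.3 of \cite{GP}), so no further work is required.
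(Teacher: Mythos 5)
Your proof is correct and takes essentially the same route as the paper's: both rest on the same bijection between elements of $M_1\cap M_2$ and syzygies of the concatenated generating set $(v_1,\ldots,v_l,w_1,\ldots,w_p)$, and both invoke the correctness of Algorithm~\ref{alg syz} for the reverse inclusion. Your write-up is if anything a bit more explicit than the paper's, which stops at noting that the concatenation $(a,-b)^T$ lies in $\operatorname{syz}(v_1,\ldots,v_l,w_1,\ldots,w_p)$ without spelling out the final step $x=\bigl((v_1,\ldots,v_l)\cdot G\bigr)\cdot\lambda$.
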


\begin{proof}
Any element
\begin{equation}
s=\left(
\begin{array}
[c]{c}%
s_{1}\\
s_{2}%
\end{array}
\right)  \in\operatorname*{syz}(v_{1},\ldots,v_{l},w_{1},\ldots,w_{p})
\end{equation}
with $s_{1}=(a_{j})\in A^{l}$ and $s_{2}=(b_{j})\in A^{p}$ yields an element%
\begin{equation}
M_{1}\ni%
{\textstyle\sum\nolimits_{j=1}^{l}}
a_{j}v_{j}=-%
{\textstyle\sum\nolimits_{j=1}^{p}}
b_{j}w_{j}\in M_{2}%
\end{equation}
in $M_{1}\cap M_{2}$. On the other hand, if $m\in M_{1}\cap M_{2}$, then there
are $a_{j},b_{j}\in A$ with $m=%
{\textstyle\sum\nolimits_{j=1}^{l}}
a_{j}v_{j}=-%
{\textstyle\sum\nolimits_{j=1}^{p}}
b_{j}w_{j}$. Then the vertical concatenation of $s_{1}=(a_{j})\in A^{l}$ and
$s_{2}=(b_{j})\in A^{p}$ is in $\operatorname*{syz}(v_{1},\ldots,v_{l}%
,w_{1},\ldots,w_{p})$.
\end{proof}

Algorithm \ref{alg syz} in conjunction with Lemma \ref{lem int} can be used to
determine a generating system of $M_{1}\cap M_{2}$. For the module
intersection problems arising from the non-planar hexagon-box diagram however,
the performance of Buchberger's algorithm over $F$ is not sufficient to yield
a generating system in a reasonable time-frame. It turns out that a classical
technique (which to our knowledge dates back to ref.~\cite{GTZ}) to simulate computations
over rational function fields via polynomial computations is much faster. We
apply this technique to the Gr\"{o}bner basis computation yielding the syzygy
matrix used to determine the module intersection.

\begin{definition}
Given monomial orderings $>_{1}$ and $>_{2}$ on the monomials in $z_{1}%
,\ldots,z_{n}$ and $c_{1},\ldots,c_{r}$, respectively, a monomial ordering $>$
is given by%
\begin{equation}
z^{\alpha}c^{\beta}>z^{\alpha^{\prime}}c^{\beta^{\prime}}\
:\Longleftrightarrow \
z^{\alpha}>_{1}z^{\alpha^{\prime}}\text{ or (}z^{\alpha}=z^{\alpha^{\prime}%
}\text{ and }c^{\beta}>_{2}c^{\beta^{\prime}}\text{)}%
\end{equation}
We call $>$ the \emph{block ordering} $(>_{1},>_{2})$ associated to $>_{1}$and
$>_{2}$.
\end{definition}

\begin{lemma}
[Localization]\label{lem rat}Let $A=\mathbb{Q}(c_{1},\ldots,c_{r}%
)[z_{1},\ldots,z_{n}]$, let $B=\mathbb{Q}[z_{1},\ldots,z_{n},c_{1},\ldots,c_{r}]$, let $v_{1},\ldots,v_{l}$ be vectors with entries in
$B$, and define%
\begin{align*}
U  & =\left\langle v_{1},\ldots,v_{l}\right\rangle \subset A^{t}\\
U^{\prime}  & =\left\langle v_{1},\ldots,v_{l}\right\rangle \subset B^{t}%
\end{align*}
Let $G\subset B^{t}$ be a Gr\"{o}bner basis of $U^{\prime}$ with respect to a global
block ordering $(>_{1},>_{2})$ with blocks $z_{1},\ldots,z_{n}>c_{1}%
,\ldots,c_{r}$. Then $G$ is also a Gr\"{o}bner basis of $U$ with respect to
$>_{1}$.
\end{lemma}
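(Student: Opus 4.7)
The plan is to exploit the block ordering structure to set up a bijection between lead behaviour in $B^t$ under $(>_1,>_2)$ and lead behaviour in $A^t$ under $>_1$, and then transfer the Gröbner basis property across.

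First I would make the basic observation about the block ordering. For any nonzero $f\in B^t$, write
\begin{equation}
f = \sum_{\alpha,i} p_{\alpha,i}(c)\, z^{\alpha} e_i, \qquad p_{\alpha,i}\in\mathbb{Q}[c_1,\ldots,c_r].
\end{equation}
Because the blocks $z_1,\ldots,z_n$ precede $c_1,\ldots,c_r$, the lead monomial $L_{(>_1,>_2)}(f) = z^{\alpha_0} c^{\beta_0} e_{i_0}$ is determined in two steps: first pick the pair $(\alpha,i)$ maximizing $z^\alpha e_i$ under $>_1$ among all $(\alpha,i)$ with $p_{\alpha,i}\neq 0$; then pick $c^{\beta_0}$ as the $>_2$-lead monomial of that $p_{\alpha_0,i_0}(c)$. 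Consequently, viewing $f\in A^t$, its lead monomial under $>_1$ is exactly $z^{\alpha_0} e_{i_0}$, i.e.~the $z$-part of the $B^t$-lead monomial. Symbolically, the map "drop the $c$-part" sends $L_{(>_1,>_2)}(f)$ to $L_{>_1}(f)$ for every $f\in B^t\subset A^t$.

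Next I would check the two containments making up the Gröbner basis condition $L_{>_1}(G)=L_{>_1}(U)$ in $A^t$. The inclusion $L_{>_1}(G)\subset L_{>_1}(U)$ is immediate because $G\subset U'\subset U$. For the reverse inclusion, take $f\in U$ and clear denominators: since the coefficients of $f$ in the $v_j$ lie in $A=\mathbb{Q}(c)[z]$, there is a nonzero $q\in\mathbb{Q}[c]$ with $qf\in U'$. Apply the Gröbner basis property of $G$ in $B^t$: some $g\in G$ satisfies $L_{(>_1,>_2)}(g)\mid L_{(>_1,>_2)}(qf)$ as monomials in $B^t$. Dropping the $c$-part on both sides (using the observation of the previous paragraph), and using that $q$ is a unit in $A$ so $L_{>_1}(qf)=L_{>_1}(f)$, we obtain $L_{>_1}(g)\mid L_{>_1}(f)$ in $A^t$. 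Hence $L_{>_1}(f)\in L_{>_1}(G)$, as required.

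Finally I would note $\langle G\rangle_A = U$: the inclusion $\langle G\rangle_A\subset U$ follows from $G\subset U'\subset U$, and the reverse inclusion holds because $G$ generates $U'$ over $B$, so every $v_j$, and hence every element of $U$, is an $A$-linear combination of elements of $G$ after clearing denominators. Together with the leading-module equality, this shows $G$ is a Gröbner basis of $U$ with respect to $>_1$. The only real subtlety, and the step that deserves care, is the clean verification that the $B^t$-lead monomial under the block ordering collapses precisely to the $A^t$-lead monomial under $>_1$ when the $c$-factor is viewed as a unit; all other steps are formal consequences of this structural fact together with the Gröbner basis property of $G$ in $B^t$.
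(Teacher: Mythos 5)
Your proof is correct and follows essentially the same route as the paper: take $f\in U$, clear the $\mathbb{Q}[c]$-denominator to land in $U'$, invoke the Gröbner basis property of $G$ in $B^t$ to obtain a divisor $g$ of the block-ordered lead monomial, and then drop the $c$-part using the block structure together with the fact that the cleared denominator is a unit in $A$. The only difference is presentational — you spell out more explicitly why the $z$-part of the $(>_1,>_2)$-lead monomial of $qf$ coincides with $L_{>_1}(f)$, and you add the (easy) check that $\langle G\rangle_A = U$ — but the core argument is identical to the paper's.
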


\begin{proof} Denote the block ordering $(>_{1},>_{2})$ by $>$.
Every $f\in U$ can be written as%
\begin{equation}
f=\frac{1}{h}\sum_{i}\alpha_{i}v_{i}%
\end{equation}
with $\sum_{i}\alpha_{i}v_{i}\in U^{\prime}$ and $ h\in\mathbb{Q}[c_{1}%
,\ldots,c_{r}]$. By $h\cdot f\in U^{\prime}$ and $G$ being a Gr\"{o}bner basis
of $U^{\prime}$, we know that $\operatorname*{NF}(h\cdot f,G)=0$. Hence, there
is a $g\in G$ with $L_{>}(g)\mid L_{>}(h\cdot f)$ and $L_{>}(h\cdot
f)=L_{>}(h)\cdot L_{>}(f)$. Since $L_{>}(h)$ is a unit (invertible) in $A$ and
$>$ is a block ordering, we obtain that $L_{>_{1}}(g)\mid L_{>_{1}}(f)$.
This argument shows that $L_{>_{1}}(U)=L_{>_{1}}(G)$, that is, $G$ is a
Gr\"{o}bner basis of $U$ with respect to $>_{1}$.
\end{proof}


\begin{remark}
The method of Lemma \ref{lem rat} turns out to be efficient since the input
modules in our setting are homogeneous in the variables $z_{1},\ldots
,z_{n},c_{1},\ldots,c_{r}$, which allows for efficient sorting of the S-polynomials in Buchberger's algorithm by degree.
\end{remark}

\begin{remark}
For our setting, modular techniques, which compute over finite fields, combine
the results using the Chinese remainder theorem and apply rational
reconstruction (as developed in a general setting in refs.~\cite{BDFP} and
\cite{BDFLP}) seem not to be useful since very large constant coefficients occur. As a
result, this approach would require considering a large number of primes for lifting.
\end{remark}

The module intersection algorithm resulting from Lemma~\ref{lem int} and
Lemma~\ref{lem rat} produces generating sets which usually are not minimal in
any sense. For a homogeneous module, a minimal generating system can be
determined, however, the computation is very expensive. Another option is to
determine the unique reduced Gr\"{o}bner basis. A Gr\"{o}bner basis
$g_{1},\ldots,g_{l}$ is called \emph{minimal} if $L(g_{i})$ does not divide
$L(g_{j})$ for all $i\neq j$. Such a minimal Gr\"{o}bner basis is called
\emph{reduced} if none of the terms of the tails $g_{i}-\operatorname{LT}%
(g_{i})$ is divisible by some $L(g_{j})$. It does, however, also not make much
sense to pass to a minimal or reduced Gr\"{o}bner basis, since Gr\"{o}bner
bases can be much larger than generating systems and usually cannot be
obtained in a reasonable time in our setting. We hence employ a randomized
algorithm for trimming the generating systems to remove extraneous generators.
This algorithm is based on determining reduced Gr\"{o}bner bases after passing
to a finite field and specific values of parameters $c_{i}$:

\begin{algorithm}
[Trimming]\label{alg trim}Given a generating system $g_{1},\ldots,g_{l}$ of a
submodule $U\subset A^{t}$ with polynomial coefficients in $\mathbb{Z}%
[c_{1},\ldots,c_{r}]$, we proceed as follows:

\begin{enumerate}
\item Substitute the $c_{i}$ in the $g_{j}$ by pairwise different large prime
numbers $p_{i}$ obtaining polynomials $h_{j}\in\mathbb{Z}[z_{1},\ldots,z_{n}]$.

\item Choose a large prime $p$ different to the primes $p_i$. Apply the canonical map $\mathbb{Z}%
[z_{1},\ldots,z_{n}]\rightarrow\mathbb{F}_{p}[z_{1},\ldots,z_{n}]$ to the
$h_{j}$ obtaining $\overline{h_{1}},\ldots,\overline{h_{l}}$.

\item Choose an integer $j_{0}\in\{1,\ldots,l\}$.

\item Compute the reduced Gr\"{o}bner bases $G_{1}$ of
\begin{equation}
\left\langle \overline{h_{j}}\mid j=1,\ldots,l\right\rangle
\end{equation}
and $G_{2}$ of
\begin{equation}
\left\langle \overline{h_{j}}\mid j=1,\ldots,l\text{ with }j\neq
j_{0}\right\rangle .
\end{equation}

\item If $G_{1}=G_{2}$ return $\{g_{j}\mid j=1,\ldots,l$ and $j\neq
  j_{0}\}$ .
\end{enumerate}
\end{algorithm}

Multiple runs of the algorithm with different $p$ and $p_i$ reduce the chance of a bad prime or a bad parameter value.
We apply Algorithm \ref{alg trim} iteratively to drop generators, starting
with generators of large (byte) size.

\section{Sparse row reduction}
\label{section_REF}
In this section we turn to discussing linear algebra techniques.
Although the generation of IBP identities using eq.~\eqref{IBP_cut}
is very fast, achieving the reduction of target integrals to
linear combinations of master integrals is highly non-trivial.
This is because the latter step requires computing the row reduced echelon
form (RREF) of the IBP identities, which becomes computationally intensive
in cases with multiple external invariants (i.e., Mandelstam variables and mass parameters),
as these enter the IBP system as parameters.
Therefore, analytic computation of the RREF requires sophisticated linear algebra techniques.

\subsection{Selection of relevant and independent IBP identities}
The IBP identities generated from eq.~\eqref{IBP_cut} usually contain linearly
redundant identities, as well as identities which are irrelevant for reducing the
target integrals. To speed up linear reduction in the subsequent step,
we make use of the following methods to remove redundant and irrelevant identities.

\begin{enumerate}
\item \emph{Removal of redundant linear identities.} This can be done with the
  standard linear algebra algorithm of picking up independent rows of a
  matrix. We construct the matrix of all requisite IBP identities, sort the
  rows by their density (i.e., number of non-vanishing entries)
  or their byte count, and then compute the RREF of the transposed matrix numerically.
  The pivot locations correspond to the linearly independent IBP identities,
  giving preference to sparser IBP identities, or IBP identities of smaller sizes.
\item \emph{Removal of irrelevant linear identities.} Furthermore,
  given a target integral set, we can single out the relevant IBP identities
  which will ultimately reduce them to master integrals. This can also be done with
  a numeric RREF. We carry out the reduction numerically and record
  the rows used for reducing the targets in the computation
  (by recording the left-multiplying matrix of the row reduced matrix).
\end{enumerate}
Regarding the numeric RREF above, in practice we assign generic
integer values to all external invariants (Mandelstam variables and mass parameters) 
and the spacetime dimension and work with finite fields. The computation is powered by
the highly efficient sparse finite-field linear algebra package {\sc SpaSM}~\cite{spasm}.

\subsection{Sparse REF and RREF strategies}
We find that the IBP system that arises from eq. \eqref{IBP_cut}, after removing the
linearly dependent and irrelevant IBP identities for targets is, in general,
very sparse. To find the row echelon form (REF) and RREF efficiently,
it is crucial to apply a sophisticated pivoting strategy to keep the linear
system sparse in the intermediate steps.

First, we write the IBP identities in the form of a matrix, with the columns
sorted according to some integral ordering, for example like that of {\sc Azurite}.
Then the RREF will eventually reduce the target integrals
to the {\sc Azurite} master integral basis. However, it is important
to swap rows and columns during the REF computation, and find suitable
pivots for row reduction, in order to keep
the matrix sparse. This can be achieved by the heuristic Markowitz strategy
\cite{doi:10.1287/mnsc.3.3.255}, provided that all entries in the sparse matrix are of
a similar size. However, in our cases, the entries are polynomials
or rational functions in Mandelstam variables and
mass parameters, and so a weighted pivot strategy, considering
both the sparsity and the byte sizes, is used.

Note that we use a {\it total pivoting} strategy for which both row swaps
and column swaps are used. The row swaps will not change the final
result of the RREF, whereas the column swap {\it will} change the
final result of the RREF. This means that the target integrals will typically not be reduced to the
desired master integral basis, but a different integral basis. If we
require the target integrals to be reduced to a specific pre-determined basis
(say, the {\sc Azurite} basis), a basis change must be carried out after RREF computation. We find that it is
more efficient to allow both row and column swaps, and then compute
the basis change, than to allow row swap only (partial pivoting strategy).

The REF and RREF algorithm is implemented in our primitive
{\sc Mathematica} code. A more efficient implementation in {\sc Singular}
is in preparation and will become available soon.

\section{The non-planar hexagon-box diagram example}
\label{section_hexagon_box}
To demonstrate the power of our new method, we consider a cutting-edge
integration-by-parts reduction problem: the reduction of two-loop five-point
non-planar massless hexagon-box integrals. Recently, this diagram has
attracted a great deal of interest, and the hexagon-box integral with a chiral
numerator has been analytically computed by use of the bootstrap method and
by superconformal Ward identities \cite{Chicherin:2017dob,Chicherin:2018ubl}.
Here we consider the analytic IBP reduction of hexagon-box integrals with
arbitrary numerators with the degree up to four.

The hexagon-box diagram,
and the necessary cuts for deriving the IBP identities, are illustrated below in
figure~\ref{fig:spanning_set_of_cuts}.
\begin{figure}[ptb]
\begin{center}
\includegraphics[angle=0, width=0.85\textwidth]{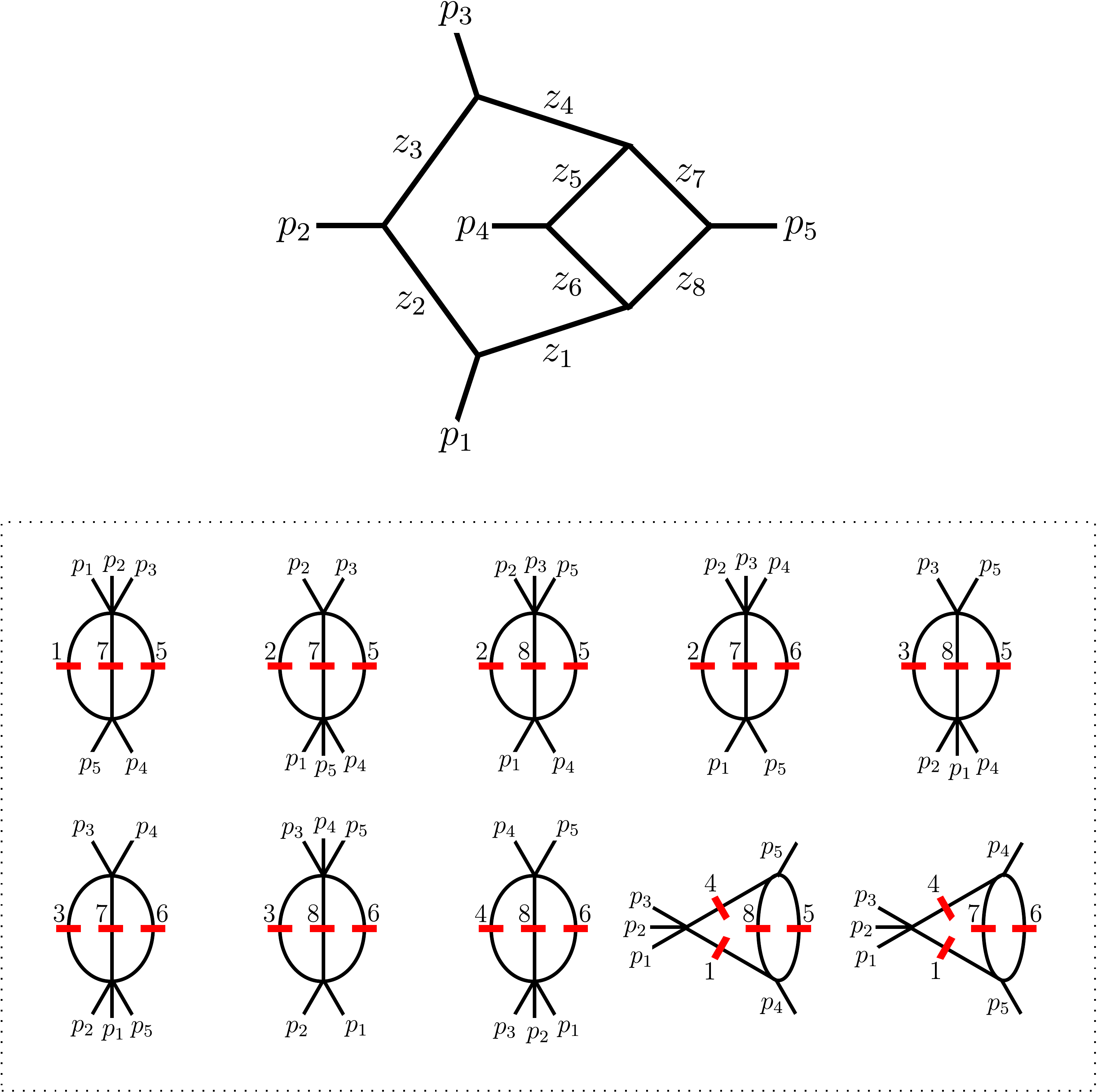}
{\vskip -2.5mm}
\caption{The fully massless non-planar hexagon-box diagram,
along with our labelling conventions for its internal lines.
The lower part shows the subset of the basis integrals
with the property that their graphs cannot be obtained by
adding internal lines to the graph of another basis integral.
The corresponding cuts $\{1,5,7\}$, $\{2,5,7\}$, $\{2,5,8\}$,
$\{2,6,7\}$, $\{3,5,8\}$, $\{3,6,7\}$, $\{3,6,8\}$, $\{4,6,8\}$,
$\{1,4,5,8\}$ and $\{1,4,6,7\}$ are the cuts required for deriving complete
IBP identities for the non-planar hexagon-box diagram.}
\label{fig:spanning_set_of_cuts}
\end{center}
\end{figure}

We define the inverse propagators as follows, setting
$P_{i\cdots j} \equiv p_i + \cdots + p_j$,
\begin{equation}
\begin{alignedat}{3}
   D_1&=\ell_1^2\,,               \hspace{8mm}  &&    D_2 = (\ell_1 - p_1)^2\,,       \hspace{8mm}  &&  D_3 = (\ell_1 - P_{12})^2\,, \\
   D_4&=(\ell_1-P_{123})^2\,,     \hspace{8mm}  &&    D_5 = (\ell_1+\ell_2+p_4)^2\,,  \hspace{8mm}  &&  D_6 = (\ell_1 + \ell_2)^2\,,  \\
   D_7&= (\ell_2 - p_5)^2\,,      \hspace{8mm}  &&    D_8 = \ell_2^2\,,               \hspace{8mm}  &&  D_9 = (\ell_1 + p_5)^2\,, \\
D_{10}&= (\ell_2 + p_1)^2 \,,                  \hspace{8mm}  && D_{11} =  (\ell_2 + p_2)^2    \,,              \hspace{8mm}  &&
\end{alignedat}
\end{equation}
and consider the family of Feynman integrals,
\begin{equation}
  I(\nu_1,\ldots, \nu_{11}) = \int \frac{\d^D l_1}{i \pi^{D/2}}
  \frac{\d^D l_2}{i \pi^{D/2}}  \frac{1}{D_1^{\nu_1} \cdots
    D_{11}^{\nu_{11}}}\,,
\label{Feynman_integral}
\end{equation}
with $\nu_9\leq 0$, $\nu_{10}\leq 0$ and $\nu_{11}\leq 0$. In terms of
the notation of subsection \ref{section:Module_intersection_no_cut}, we have
$L=2$, $E=4$, $k=8$ and $m=11$. We furthermore define the Baikov variables
$z_i\equiv D_i$, set $s_{ij}=(p_i+p_j)^2$ and express the IBP identities
in terms of the Mandelstam variables $s_{12}, s_{13}, s_{14}, s_{23}, s_{24}$.
\clearpage
Using {\sc Azurite} \cite{Georgoudis:2016wff} we establish that,
without applying global symmetries, there are $75$ ``pre''-master integrals---i.e.,
master integrals where global symmetry relations have not yet been imposed.
In the notation of eq.~\eqref{Feynman_integral}, they take the form,
\begin{eqnarray}
  &&\{{\mathcal I}_1,\ldots, {\mathcal I}_{75}\}\equiv\nonumber\\
  &&\begin{array}{lll}
   \big\{I(1, 1, 1, 1, 1, 1, 1, 1, -2, 0, 0), & I(1, 1, 1, 1, 1, 1, 1, 1, -1, 0, 0), &  I(1, 1, 1, 1,1, 1, 1, 1, 0, 0, 0), \\
 I(1, 1, 1, 0, 1, 1, 1, 1, -1, 0, 0), & I(1, 1, 1, -1, 1, 1, 1, 1, 0, 0, 0), & I(1, 1, 1, 0, 1, 1, 1, 1, 0, 0, 0), \\
I(1, 1, -1, 1, 1, 1, 1, 1, 0, 0, 0), & I(1, 1, 0, 1, 1, 1, 1, 1, 0, 0, 0), & I(1, -1, 1, 1, 1, 1, 1, 1, 0, 0, 0),\\
 I(1, 0, 1, 1, 1, 1, 1, 1, 0, 0, 0), & I(0, 1, 1, 1, 1, 1, 1, 1, -1, 0, 0), & I(-1, 1, 1, 1, 1, 1, 1, 1, 0, 0, 0), \\
I(0, 1, 1, 1, 1, 1, 1, 1, 0, 0, 0),& I(1, 1, 1, 1, 1, -1, 0, 1, 0, 0, 0), & I(1, 1, 1, 1, 1, 0, 0, 1, 0, 0, 0), \\
I(1, 1, 1, 1, -1, 1, 1, 0, 0, 0, 0),& I(1, 1, 1, 1, 0, 1, 1, 0, 0, 0, 0), & I(1, 1, 1, -1, 1, 1, 1, 0, 0, 0, 0),\\
 I(1, 1, 1, 0, 1, 1, 1, 0, 0, 0, 0), & I(1, 1, 1, -1, 1, 0, 1, 1, 0, 0, 0), & I(1, 1, 1, 0, 1, 0, 1, 1, 0, 0, 0), \\
I(1, -1, 1, 0, 1, 1, 1, 1, 0, 0, 0), & I(1, 0, 1, 0, 1, 1, 1, 1, 0, 0, 0), & I(1, 0, 0, 1, 1, 1, 1, 1, 0, 0, 0), \\
I(-1, 1, 1, 1, 1, 1, 0, 1, 0, 0, 0),& I(0, 1, 1, 1, 1, 1, 0, 1, 0, 0, 0), & I(-1, 1, 1, 1, 0, 1, 1, 1, 0, 0, 0),\\
 I(0, 1, 1, 1, 0, 1, 1, 1, 0, 0, 0),& I(0, 1, 1, 0, 1, 1, 1, 1, 0, 0, 0), & I(-1, 1, 0, 1, 1, 1, 1, 1, 0, 0, 0), \\
I(0, 1, 0, 1, 1, 1, 1, 1, 0, 0, 0), & I(1, 1, 1, 0, 1, 0, 1, 0, 0, 0, 0), & I(1, 1, 1, 0, 1, 0, 0, 1, 0, 0, 0),\\
 I(1, 1, 1, 0, 0, 1, 1, 0, 0, 0, 0), & I(1, 1, 0, 1, 1, 0, 0, 1, 0, 0, 0), & I(1, 1, 0, 1, 0, 1, 1, 0, 0, 0, 0), \\
I(1, 1, 0, 0, 1, 1, 1, 0, 0, 0, 0), & I(1, 1, 0, 0, 1, 0, 1, 1, 0, 0, 0), & I(1, 0, 1, 1, 1, 0, 0, 1, 0, 0, 0), \\
I(1, 0, 1, 1, 0, 1, 1, 0, 0, 0, 0), & I(1, -1, 1, 0, 1, 1, 1, 0, 0, 0, 0), & I(1, 0, 1, 0, 1, 1, 1, 0, 0, 0, 0),\\
 I(1, -1, 1, 0, 1, 0, 1, 1, 0, 0, 0), & I(1, 0, 1, 0, 1, 0, 1, 1, 0, 0, 0), & I(0, 1, 1, 1, 1, 0, 0, 1, 0, 0, 0), \\
I(0, 1, 1, 1, 0, 1, 1, 0, 0, 0, 0), & I(0, 1, 1, 1, 0, 1, 0, 1, 0, 0, 0), & I(0, 1, 1, 0, 1, 1, 1, 0, 0, 0, 0),\\
 I(0, 1, 1, 0, 1, 1, 0, 1, 0, 0, 0), & I(0, 1, 1, 0, 1, 0, 1, 1, 0, 0, 0), & I(0, 1, 1, 0, 0, 1, 1, 1, 0, 0, 0),\\
 I(-1, 1, 0, 1, 1, 1, 0, 1, 0, 0, 0),& I(0, 1, 0, 1, 1, 1, 0, 1, 0, 0, 0), & I(-1, 1, 0, 1, 0, 1, 1, 1, 0, 0, 0),\\
 I(0, 1, 0, 1, 0, 1, 1, 1, 0, 0, 0), & I(0, 1, 0, 0, 1, 1, 1, 1, 0, 0, 0), & I(0, 0, 1, 1, 1, 1, 0, 1, 0, 0, 0), \\
I(0, 0, 1, 1, 0, 1, 1, 1, 0, 0, 0), & I(0, 0, 1, 0, 1, 1, 1, 1, 0, 0, 0), & I(1, 0, 1, 0, 1, 0, 1, 0, 0, 0, 0), \\
I(1, 0, 1, 0, 1, 0, 0, 1, 0, 0, 0), & I(1, 0, 1, 0, 0, 1, 1, 0, 0, 0, 0), & I(1, 0, 0, 1, 1, 0, 0, 1, 0, 0, 0), \\
I(1, 0, 0, 1, 0, 1, 1, 0, 0, 0, 0), & I(0, 1, 0, 1, 1, 0, 0, 1, 0, 0, 0), & I(0, 1, 0, 1, 0, 1, 1, 0, 0, 0, 0),\\
 I(0, 1, 0, 1, 0, 1, 0, 1, 0, 0, 0), & I(1, 0, 0, 0, 1, 0, 1, 0, 0, 0, 0), & I(0, 1, 0, 0, 1, 0, 1, 0, 0, 0, 0),\\
 I(0, 1, 0, 0, 1, 0, 0, 1, 0, 0, 0), & I(0, 1, 0, 0, 0, 1, 1, 0, 0, 0, 0), & I(0, 0, 1, 0, 1, 0, 0, 1, 0, 0, 0),\\
 I(0, 0, 1, 0, 0, 1, 1, 0, 0, 0, 0), & I(0, 0, 1, 0, 0, 1, 0, 1, 0, 0, 0), & I(0, 0, 0, 1, 0, 1, 0, 1, 0, 0, 0) \big\} \,.
  \end{array} \nonumber
\\ \label{eq:Azurite_basis}
\end{eqnarray}
The graphs of these integrals are shown in figure~\ref{fig:integral_basis}.
We remark that {\sc Azurite} chooses master integrals which contain no doubled propagators.

\clearpage

\begin{figure}[!h]
\begin{center}
\includegraphics[angle=0, width=\textwidth]{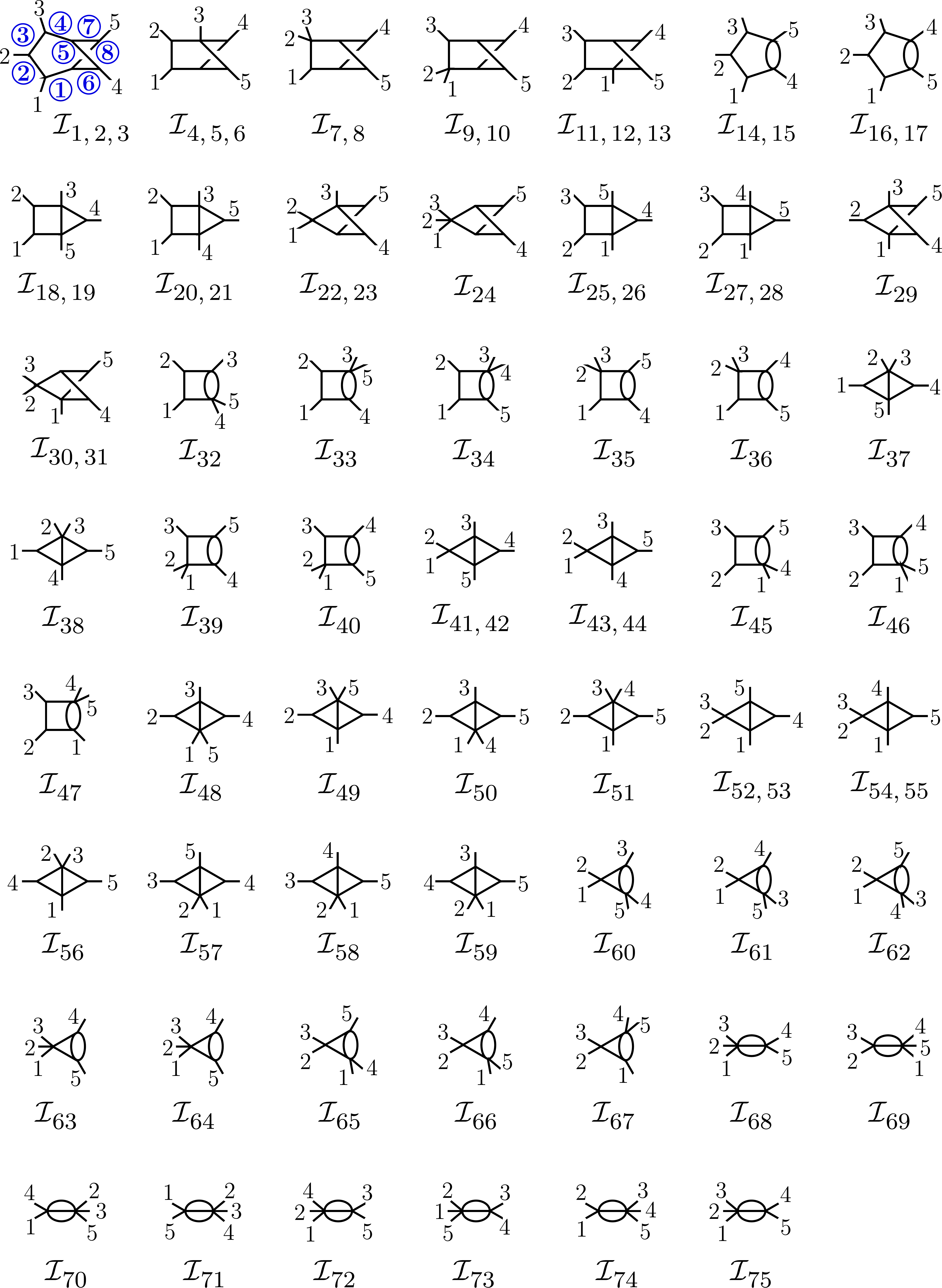}
{\vskip -2.5mm}
\caption{(Color online.) The $75$ ``pre''-master integrals found by {\sc Azurite},
  with the global symmetry option turned off, for
the non-planar hexagon-box family of eq.~\eqref{Feynman_integral}. By
turning on the global symmetry option, {\sc Azurite} determines that
there are $73$ master integrals. These are the illustrated ``pre''-master integrals
excluding $\mathcal{I}_{63}$ and $\mathcal{I}_{68}$. Our labeling convention for the
propagators, corresponding to the indices recorded in eq.~\eqref{eq:Azurite_basis},
are shown in the graph of the hexagon-box itself as the blue encircled numbers.}
\label{fig:integral_basis}
\end{center}
\end{figure}

\clearpage

As explained at the end of section \ref{subsection_module_intersection_on_cuts},
the cuts that are necessary to construct the complete IBP identities are the maximal cuts
of the ``uncollapsible'' master integrals in eq.~\eqref{eq:Azurite_basis}. From
this list of master integrals we find that the following $10$ cuts
are necessary for computing the complete IBP identities,
\begin{gather}
  \{1,5,7\}, \{2,5,7\}, \{2,5,8\}, \{2,6,7\}, \{3,5,8\}, \{3,6,7\}, \{3,6,8\}, \{4,6,8\}, \nonumber \\
\{1,4,5,8\} , \{1,4,6,7\}  \label{eq:spanning_set_of_cuts} \,.
\end{gather}
Here for example the notation $\{1,5,7\}$ means $\mathcal{S}_\mathrm{cut} = (\zeta_1,\zeta_2,\zeta_3)=(1,5,7)$; that is,
the triple cut,
\begin{equation}
  \label{eq:9}
  z_1 \to 0,\quad z_5 \to 0, \quad z_7 \to 0 \,.
\end{equation}
Thus, for the hexagon-box diagram, the set of necessary cuts thus includes $8$ triple cuts
and $2$ quadruple cuts. The ``pre''-master integrals
supported on these cuts are listed in table~\ref{table:pre-master}.

The advantage of applying cuts is that the number of integrals in the IBP relations
on each cut will be much less than when no cuts are applied.

\begin{table}[ht]
  \centering
  \begin{tabular}{|c|p{12cm}|}
\hline
cut & ``pre''-master integrals\\
\hline
   $\{1,5,7\}$ & $\mathcal I_{1}$, $\mathcal I_{2}$, $\mathcal I_{3}$,
                 $\mathcal I_{4}$, $\mathcal I_{5}$, $\mathcal I_{6}$,
                 $\mathcal I_{7}$, $\mathcal I_{8}$, $\mathcal I_{9}$,
                 $\mathcal I_{10}$, $\mathcal I_{18}$, $\mathcal
                 I_{19}$, $\mathcal I_{20}$, $\mathcal I_{21}$,
                 $\mathcal I_{22}$, $\mathcal I_{23}$, $\mathcal
                 I_{24}$, $\mathcal I_{32}$, $\mathcal I_{37}$,
                 $\mathcal I_{38}$, $\mathcal I_{41}$, $\mathcal
                 I_{42}$, $\mathcal I_{43}$, $\mathcal I_{44}$,
                 $\mathcal I_{60}$, $\mathcal I_{68}$ \\
\hline
$\{2,5,7\}$ & $\mathcal I_{1}$, $\mathcal I_{2}$, $\mathcal I_{3}$,
         $\mathcal I_{4}$, $\mathcal I_{5}$, $\mathcal I_{6}$,
         $\mathcal I_{7}$, $\mathcal I_{8}$, $\mathcal I_{11}$,
         $\mathcal I_{12}$, $\mathcal I_{13}$, $\mathcal I_{18}$,
         $\mathcal I_{19}$, $\mathcal I_{20}$, $\mathcal I_{21}$,
         $\mathcal I_{29}$, $\mathcal I_{30}$, $\mathcal I_{31}$,
         $\mathcal I_{32}$, $\mathcal I_{37}$, $\mathcal I_{38}$,
         $\mathcal I_{48}$, $\mathcal I_{50}$, $\mathcal I_{56}$,
         $\mathcal I_{69}$ \\
\hline
$\{2, 5, 8\}$ & $\mathcal I_{1}$, $\mathcal I_{2}$, $\mathcal
                  I_{3}$, $\mathcal I_{4}$, $\mathcal I_{5}$,
                  $\mathcal I_{6}$, $\mathcal I_{7}$, $\mathcal
                  I_{8}$, $\mathcal I_{11}$, $\mathcal I_{12}$,
                  $\mathcal I_{13}$, $\mathcal I_{14}$, $\mathcal
                  I_{15}$, $\mathcal I_{20}$, $\mathcal I_{21}$,
                  $\mathcal I_{25}$, $\mathcal I_{26}$, $\mathcal
                  I_{29}$, $\mathcal I_{30}$, $\mathcal I_{31}$,
                  $\mathcal I_{33}$, $\mathcal I_{35}$, $\mathcal
                  I_{38}$, $\mathcal I_{45}$, $\mathcal I_{49}$,
                  $\mathcal I_{50}$, $\mathcal I_{52}$, $\mathcal
                  I_{53}$, $\mathcal I_{56}$, $\mathcal I_{65}$,
                  $\mathcal I_{70}$ \\
\hline
$\{2, 6, 7\}$ & $\mathcal I_{1}$, $\mathcal I_{2}$, $\mathcal I_{3}$,
                 $\mathcal I_{4}$, $\mathcal I_{5}$, $\mathcal I_{6}$,
                 $\mathcal I_{7}$, $\mathcal I_{8}$, $\mathcal
                 I_{11}$, $\mathcal I_{12}$, $\mathcal I_{13}$,
                 $\mathcal I_{16}$, $\mathcal I_{17}$, $\mathcal
                 I_{18}$, $\mathcal I_{19}$, $\mathcal I_{27}$,
                 $\mathcal I_{28}$, $\mathcal I_{29}$, $\mathcal
                 I_{30}$, $\mathcal I_{31}$, $\mathcal I_{34}$,
                 $\mathcal I_{36}$, $\mathcal I_{37}$, $\mathcal
                 I_{46}$, $\mathcal I_{48}$, $\mathcal I_{51}$,
                 $\mathcal I_{54}$, $\mathcal I_{55}$, $\mathcal
                 I_{56}$, $\mathcal I_{66}$, $\mathcal I_{71}$ \\
\hline
$\{3,5,8\}$ & $\mathcal I_{1}$, $\mathcal I_{2}$, $\mathcal I_{3}$,
              $\mathcal I_{4}$, $\mathcal I_{5}$, $\mathcal I_{6}$,
              $\mathcal I_{9}$, $\mathcal I_{10}$, $\mathcal I_{11}$,
              $\mathcal I_{12}$, $\mathcal I_{13}$, $\mathcal I_{14}$,
              $\mathcal I_{15}$, $\mathcal I_{20}$, $\mathcal I_{21}$,
              $\mathcal I_{22}$, $\mathcal I_{23}$, $\mathcal I_{25}$,
              $\mathcal I_{26}$, $\mathcal I_{29}$, $\mathcal I_{33}$,
              $\mathcal I_{39}$, $\mathcal I_{43}$, $\mathcal I_{44}$,
              $\mathcal I_{45}$, $\mathcal I_{49}$, $\mathcal I_{50}$,
              $\mathcal I_{57}$, $\mathcal I_{59}$, $\mathcal I_{61}$,
              $\mathcal I_{72}$ \\
\hline
$\{3,6,7\}$ & $\mathcal I_{1}$, $\mathcal I_{2}$, $\mathcal I_{3}$,
              $\mathcal I_{4}$, $\mathcal I_{5}$, $\mathcal I_{6}$,
              $\mathcal I_{9}$, $\mathcal I_{10}$, $\mathcal I_{11}$,
              $\mathcal I_{12}$, $\mathcal I_{13}$, $\mathcal I_{16}$,
              $\mathcal I_{17}$, $\mathcal I_{18}$, $\mathcal I_{19}$,
              $\mathcal I_{22}$, $\mathcal I_{23}$, $\mathcal I_{27}$,
              $\mathcal I_{28}$, $\mathcal I_{29}$, $\mathcal I_{34}$,
              $\mathcal I_{40}$, $\mathcal I_{41}$, $\mathcal I_{42}$,
              $\mathcal I_{46}$, $\mathcal I_{48}$, $\mathcal I_{51}$,
              $\mathcal I_{58}$, $\mathcal I_{59}$, $\mathcal I_{62}$,
              $\mathcal I_{73}$\\
\hline
$\{3,6,8\}$ & $\mathcal I_{1}$, $\mathcal I_{2}$, $\mathcal I_{3}$,
              $\mathcal I_{4}$, $\mathcal I_{5}$, $\mathcal I_{6}$,
              $\mathcal I_{9}$, $\mathcal I_{10}$, $\mathcal I_{11}$,
              $\mathcal I_{12}$, $\mathcal I_{13}$, $\mathcal I_{22}$,
              $\mathcal I_{23}$, $\mathcal I_{25}$, $\mathcal I_{26}$,
              $\mathcal I_{27}$, $\mathcal I_{28}$, $\mathcal I_{29}$,
              $\mathcal I_{47}$, $\mathcal I_{49}$, $\mathcal I_{51}$,
              $\mathcal I_{57}$, $\mathcal I_{58}$, $\mathcal I_{59}$,
              $\mathcal I_{74}$ \\
\hline
$\{4,6,8\}$ & $\mathcal I_{1}$, $\mathcal I_{2}$, $\mathcal I_{3}$,
              $\mathcal I_{7}$, $\mathcal I_{8}$, $\mathcal I_{9}$,
              $\mathcal I_{10}$, $\mathcal I_{11}$, $\mathcal I_{12}$,
              $\mathcal I_{13}$, $\mathcal I_{24}$, $\mathcal I_{25}$,
              $\mathcal I_{26}$, $\mathcal I_{27}$, $\mathcal I_{28}$,
              $\mathcal I_{30}$, $\mathcal I_{31}$, $\mathcal I_{47}$,
              $\mathcal I_{52}$, $\mathcal I_{53}$, $\mathcal I_{54}$,
              $\mathcal I_{55}$, $\mathcal I_{57}$, $\mathcal I_{58}$,
              $\mathcal I_{67}$, $\mathcal I_{75}$\\
\hline
$\{1,4,5,8\}$ & $\mathcal I_{1}$, $\mathcal I_{2}$, $\mathcal I_{3}$,
                $\mathcal I_{7}$, $\mathcal I_{8}$, $\mathcal I_{9}$,
                $\mathcal I_{10}$, $\mathcal I_{14}$, $\mathcal
                I_{15}$, $\mathcal I_{24}$, $\mathcal I_{35}$,
                $\mathcal I_{39}$, $\mathcal I_{63}$ \\
\hline
$\{1,4,6,7\}$ & $\mathcal I_{1}$, $\mathcal I_{2}$, $\mathcal I_{3}$,
                $\mathcal I_{7}$, $\mathcal I_{8}$, $\mathcal I_{9}$,
                $\mathcal I_{10}$, $\mathcal I_{16}$, $\mathcal
                I_{17}$, $\mathcal I_{24}$, $\mathcal I_{36}$,
                $\mathcal I_{40}$, $\mathcal I_{64}$\\
\hline
\end{tabular}
  \caption{``Pre''-master integrals supported on each of the $10$ cuts
  necessary to construct the complete IBP reductions.}
\label{table:pre-master}
\end{table}

If we switch on global symmetries in {\sc Azurite}, then the number
of independent master integrals is found to be $73$. This is due to the additional
symmetry relations,
\begin{align}
\mathcal I_{63}\equiv I(1, 0, 0, 1, 1, 0, 0, 1, 0, 0, 0) &=I(1, 0, 0, 1, 0, 1, 1, 0, 0, 0,
                                     0) \,,\nonumber \\
\mathcal I_{68}\equiv I(1, 0, 0, 0, 1, 0, 1, 0, 0, 0, 0) &=   I(0, 0, 0, 1, 0, 1, 0, 1, 0, 0, 0) \,.
\label{global_symmetry}
\end{align}
Hence the set of $73$ master integrals is,
\begin{equation}
  \label{eq:10}
   \mathcal{F}_\mathrm{MI} = \{ \mathcal{I}_1, \ldots, \mathcal{I}_{75} \} \setminus \{ \mathcal I_{63}, \mathcal I_{68}\} \,.
\end{equation}

In this paper, we first reduce integrals to the linear combination of
$75$ ``pre''-master integrals and then apply
eq.~\eqref{global_symmetry} to further achieve the reduction to the
linearly independent $73$ master integrals.

To demonstrate the power of our method, we show how to reduce all the
numerator-degree-$4$, $3$, $2$ and $1$ hexagon-box integrals (our target integrals),
\begin{eqnarray}
  \begin{array}{lll}
    \{I(1,1,1,1,1,1,1,1,0,0,-4),&I(1,1,1,1,1,1,1,1,0,-1,-3),&I(1,1,1,1,1,1,1,1,0,-2,-2),\\
I(1,1,1,1,1,1,1,1,0,-3,-1) ,&I(1,1,1,1,1,1,1,1,0,-4,0),&I(1,1,1,1,1,1,1,1,-1,0,-3),\\
I(1,1,1,1,1,1,1,1,-1,-1,-2), & I(1,1,1,1,1,1,1,1,-1,-2,-1),&I(1,1,1,1,1,1,1,1,-1,-3,0),\\
I(1,1,1,1,1,1,1,1,-2,0,-2),&I(1,1,1,1,1,1,1,1,-2,-1,-1),&I(1,1,1,1,1,1,1,1,-2,-2,0),\\
I(1,1,1,1,1,1,1,1,-3,0,-1),&I(1,1,1,1,1,1,1,1,-3,-1,0),&I(1,1,1,1,1,1,1,1,-4,0,0),\\
I(1,1,1,1,1,1,1,1,0,0,-3),&I(1,1,1,1,1,1,1,1,0,-1,-2),&I(1,1,1,1,1,1,1,1,0,-2,-1),\\
I(1,1,1,1,1,1,1,1,0,-3,0),&I(1,1,1,1,1,1,1,1,-1,0,-2),&I(1,1,1,1,1,1,1,1,-1,-1,-1),\\
I(1,1,1,1,1,1,1,1,-1,-2,0),&I(1,1,1,1,1,1,1,1,-2,0,-1),&I(1,1,1,1,1,1,1,1,-2,-1,0),\\
I(1,1,1,1,1,1,1,1,-3,0,0),&I(1,1,1,1,1,1,1,1,0,0,-2),&I(1,1,1,1,1,1,1,1,0,-1,-1),\\
I(1,1,1,1,1,1,1,1,0,-2,0),&I(1,1,1,1,1,1,1,1,-1,0,-1),&I(1,1,1,1,1,1,1,1,-1,-1,0),\\
I(1,1,1,1,1,1,1,1,0,0,-1),&I(1,1,1,1,1,1,1,1,0,-1,0)\}
  \end{array}\nonumber\\
\label{eq:target}
\end{eqnarray}
analytically to express them as linear combinations of the $73$ master integrals.

\subsection{Module intersection on cuts}
In this section we show explicitly how to apply our module intersection method on
cuts, in order to obtain simplified IBP systems (i.e., which do not involve
integrals with doubled propagators) on unitarity cuts.

The module $M_1$ defined in subsection~\ref{section:Module_intersection_no_cut}
for the hexagon-box diagram without cuts applied, is
generated by the following $2\times (4+2)=12$ generators, cf.~eq.~\eqref{M1_generators},

{\tiny
\begin{gather}
(z_1-z_2,z_1-z_2,-s_{12}+z_1-z_2,-s_{12}-s_{13}+z_1-z_2,s_{14}+z_1-z_2-z_8+z_{10},z_1-z_2-z_8+z_{10},0,0,-s_{12}-s_{13}-s_{14}+z_1-z_2,0,0)\nonumber
\\
(0,0,0,0,s_{14}+z_1-z_2-z_8+z_{10},z_1-z_2-z_8+z_{10},s_{12}+s_{13}+s_{14}-z_8+z_{10},z_{10}-z_8,0,z_{10}-z_8,s_{12}-z_8+z_{10})\nonumber
\\
(s_{12}+z_2-z_3,z_2-z_3,z_2-z_3,-s_{23}+z_2-z_3,s_{12}+s_{24}+z_2-z_3-z_8+z_{11},s_{12}+z_2-z_3-z_8+z_{11},0,0,-s_{23}-s_{24}+z_2-z_3,0,0)\nonumber
\\
(0,0,0,0,s_{12}+s_{24}+z_2-z_3-z_8+z_{11},s_{12}+z_2-z_3-z_8+z_{11},s_{12}+s_{23}+s_{24}-z_8+z_{11},z_{11}-z_8,0,s_{12}-z_8+z_{11},z_{11}-z_8)\nonumber
\\  (s_{13}+s_{23}+z_3-z_4,s_{23}+z_3-z_4,z_3-z_4,z_3-z_4,-2
s_{12}-s_{13}-s_{14}-s_{23}-s_{24}+z_3-z_5+z_6+z_7+z_8-z_9-z_{10}-z_{11},\nonumber
\\ -s_{12}+z_3-z_5+z_6+z_7+z_8-z_9-z_{10}-z_{11},0,0,s_{12}+s_{13}+s_{14}+s_{23}+s_{24}+z_3-z_4,0,0)\nonumber
\\  (0,0,0,0,-2
s_{12}-s_{13}-s_{14}-s_{23}-s_{24}+z_3-z_5+z_6+z_7+z_8-z_9-z_{10}-z_{11},-s_{12}+z_3-z_5+z_6+z_7+z_8-z_9-z_{10}-z_{11},\nonumber
\\
-2
s_{12}-s_{13}-s_{14}-s_{23}-s_{24}+z_4-z_5+z_6+z_7+z_8-z_9-z_{10}-z_{11},-s_{12}-s_{13}-s_{23}+z_4-z_5+z_6+z_7+z_8-z_9-z_{10}-z_{11},
\nonumber
\\
0,-s_{12}-s_{23}+z_4-z_5+z_6+z_7+z_8-z_9-z_{10}-z_{11},-s_{12}-s_{13}+z_4-z_5+z_6+z_7+z_8-z_9-z_{10}-z_{11})\nonumber
\\
(-s_{12}-s_{13}-s_{23}+z_4-z_9,-s_{12}-s_{13}-s_{14}-s_{23}+z_4-z_9,-s_{12}-s_{13}-s_{14}-s_{23}-s_{24}+z_4-z_9,z_4-z_9,z_5-z_6,z_5-z_6,0,0,z_4-z_9,0,0)\nonumber
\\
(0,0,0,0,z_5-z_6,z_5-z_6,-z_4+z_5-z_6+z_9,s_{12}+s_{13}+s_{23}-z_4+z_5-z_6+z_9, \nonumber
\\ 0,s_{12}+s_{13}+s_{14}+s_{23}-z_4+z_5-z_6+z_9,s_{12}+s_{13}+s_{23}+s_{24}-z_4+z_5-z_6+z_9)\nonumber
\\  (2
z_1,z_1+z_2,-s_{12}+z_1+z_3,-s_{12}-s_{13}-s_{23}+z_1+z_4,-s_{12}-s_{13}-s_{23}+z_1+z_4+z_6-z_8-z_9,z_1+z_6-z_8,0,0,z_1+z_9,0,0)\nonumber
\\
(0,0,0,0,-s_{12}-s_{13}-s_{23}+z_1+z_4+z_6-z_8-z_9,z_1+z_6-z_8,z_6-z_8-z_9,-z_1+z_6-z_8,0,-z_2+z_6-z_8,s_{12}-z_1+z_2-z_3+z_6-z_8)\nonumber
\\
(-z_1+z_6-z_8,-z_1+z_6-z_{10},-z_1+z_6+z_8-z_{10}-z_{11},s_{12}+s_{13}+s_{23}-z_1-z_4+z_5-z_7+z_9,\nonumber
\\ s_{12}+s_{13}+s_{23}-z_1-z_4+z_5+z_8+z_9,-z_1+z_6+z_8,0,0,-z_1+z_6-z_7,0,0)\nonumber
\\
(0,0,0,0,s_{12}+s_{13}+s_{23}-z_1-z_4+z_5+z_8+z_9,-z_1+z_6+z_8,z_7+z_8,2
z_8,0,z_8+z_{10},z_8+z_{11})\,.
\end{gather}
}
Note that the generators are at most linear in the $z_i$, and always homogeneous in
$z_1, \ldots, z_{11}$ and $s_{12}, s_{13}, s_{14}, s_{23}, s_{24}$.

The module $M_2$ for the hexagon-box diagram without applied cuts is
generated by the following $11$ generators, cf.~eq.~\eqref{M2_generators},
\begin{align}
  \label{eq:11}
 &(z_1,0,0,0,0,0,0,0,0,0,0), (0,z_2,0,0,0,0,0,0,0,0,0)\nonumber \\ & (0,0,z_3,0,0,0,0,0,0,0,0),(0,0,0,z_4,0,0,0,0,0,0,0)\nonumber \\ & (0,0,0,0,z_5,0,0,0,0,0,0), (0,0,0,0,0,z_6,0,0,0,0,0)\nonumber \\ & (0,0,0,0,0,0,z_7,0,0,0,0), (0,0,0,0,0,0,0,z_8,0,0,0)\nonumber \\&  (0,0,0,0,0,0,0,0,1,0,0), (0,0,0,0,0,0,0,0,0,1,0)\nonumber \\ & (0,0,0,0,0,0,0,0,0,0,1)\,.
\end{align}
We now proceed to consider the modules on the unitarity cuts given in eq.~\eqref{eq:spanning_set_of_cuts}.
For example, for the cut $\{2,5,7\}$, we apply the replacements
\begin{equation}
z_2 \to 0\,,\quad z_5 \to 0\,, \quad z_7 \to 0\,.
\label{cut_257}
 \end{equation}
The propagator indices are thus classified as,
\begin{align}
\mathcal{S}_\mathrm{cut}     =  \{ 2,5,7\}\,, \quad
\mathcal{S}_\mathrm{uncut}   =  \{ 1,3,4,6,8\} \,, \quad
\mathcal{S}_\mathrm{ISP}     =  \{ 9,10,11\}  \,.
\end{align}
From eq.~\eqref{module_cut} it follows that the generators of $\tilde M_1$ on the cut
$\{2,5,7\}$ take the form,
{\tiny
\begin{gather}
  \label{eq:8}
   (z_1,z_1,z_1-s_{12},-s_{12}-s_{13}+z_1,s_{14}+z_1-z_8+z_{10},z_1-z_8+z_{10},0,0,-s_{12}-s_{13}-s_{14}+z_1,0,0)\nonumber
   \\
(0,0,0,0,s_{14}+z_1-z_8+z_{10},z_1-z_8+z_{10},s_{12}+s_{13}+s_{14}-z_8+z_{10},z_{10}-z_8,0,z_{10}-z_8,s_{12}-z_8+z_{10})\nonumber
\\
(s_{12}-z_3,-z_3,-z_3,-s_{23}-z_3,s_{12}+s_{24}-z_3-z_8+z_{11},s_{12}-z_3-z_8+z_{11},0,0,-s_{23}-s_{24}-z_3,0,0)\nonumber
\\
 (0,0,0,0,s_{12}+s_{24}-z_3-z_8+z_{11},s_{12}-z_3-z_8+z_{11},s_{12}+s_{23}+s_{24}-z_8+z_{11},z_{11}-z_8,0,s_{12}-z_8+z_{11},z_{11}-z_8)\nonumber
 \\
(s_{13}+s_{23}+z_3-z_4,s_{23}+z_3-z_4,z_3-z_4,z_3-z_4,-2
s_{12}-s_{13}-s_{14}-s_{23}-s_{24}+z_3+z_6+z_8-z_9-z_{10}-z_{11}, %
 \nonumber
\\ -s_{12}+z_3+z_6+z_8-z_9-z_{10}-z_{11},0,0,s_{12}+s_{13}+s_{14}+s_{23}+s_{24}+z_3-z_4,0,0)\nonumber
\\
 (0,0,0,0,-2
 s_{12}-s_{13}-s_{14}-s_{23}-s_{24}+z_3+z_6+z_8-z_9-z_{10}-z_{11},-s_{12}+z_3+z_6+z_8-z_9-z_{10}-z_{11},%
\nonumber\\
-2
 s_{12}-s_{13}-s_{14}-s_{23}-s_{24}+z_4+z_6+z_8-z_9-z_{10}-z_{11},-s_{12}-s_{13}-s_{23}+z_4+z_6+z_8-z_9-z_{10}-z_{11},%
\nonumber\\
0,-s_{12}-s_{23}+z_4+z_6+z_8-z_9-z_{10}-z_{11},-s_{12}-s_{13}+z_4+z_6+z_8-z_9-z_{10}-z_{11})\nonumber
 \\
 (-s_{12}-s_{13}-s_{23}+z_4-z_9,-s_{12}-s_{13}-s_{14}-s_{23}+z_4-z_9,-s_{12}-s_{13}-s_{14}-s_{23}-s_{24}+z_4-z_9,z_4-z_9,-z_6,-z_6,0,0,z_4-z_9,0,0)\nonumber
 \\
 (0,0,0,0,-z_6,-z_6,-z_4-z_6+z_9,s_{12}+s_{13}+s_{23}-z_4-z_6+z_9,0,s_{12}+s_{13}+s_{14}+s_{23}-z_4-z_6+z_9,s_{12}+s_{13}+s_{23}+s_{24}-z_4-z_6+z_9)\nonumber
 \\
(2
z_1,z_1,-s_{12}+z_1+z_3,-s_{12}-s_{13}-s_{23}+z_1+z_4,-s_{12}-s_{13}-s_{23}+z_1+z_4+z_6-z_8-z_9,z_1+z_6-z_8,0,0,z_1+z_9,0,0)\nonumber
\\
 (0,0,0,0,-s_{12}-s_{13}-s_{23}+z_1+z_4+z_6-z_8-z_9,z_1+z_6-z_8,z_6-z_8-z_9,-z_1+z_6-z_8,0,z_6-z_8,s_{12}-z_1-z_3+z_6-z_8)\nonumber
 \\
(-z_1+z_6-z_8,-z_1+z_6-z_{10},-z_1+z_6+z_8-z_{10}-z_{11},s_{12}+s_{13}+s_{23}-z_1-z_4+z_9,s_{12}+s_{13}+s_{23}-z_1-z_4+z_8+z_9,%
\nonumber \\-z_1+z_6+z_8,0,0,z_6-z_1,0,0)\nonumber
\\
(0,0,0,0,s_{12}+s_{13}+s_{23}-z_1-z_4+z_8+z_9,-z_1+z_6+z_8,z_8,2
z_8,0,z_8+z_{10},z_8+z_{11})\,,
\label{M1_257}
\end{gather}
}
whereas the generators of $\tilde M_2$ on the cut
$\{2,5,7\}$ take the form,
\begin{align}
 &(z_1,0,0,0,0,0,0,0,0,0,0), (0,0,z_3,0,0,0,0,0,0,0,0)\nonumber \\ & (0,0,0,z_4,0,0,0,0,0,0,0), (0,0,0,0,0,z_6,0,0,0,0,0)\nonumber \\ & (0,0,0,0,0,0,0,z_8,0,0,0), (0,0,0,0,0,0,0,0,1,0,0)\nonumber \\ & (0,0,0,0,0,0,0,0,0,1,0) , (0,0,0,0,0,0,0,0,0,0,1)\,.
 \label{M2_257}
\end{align}
The intersection of $\tilde M_1$ and $\tilde M_2$, with the
generators given above, is then computed by the method described
in section~\ref{section_intersection}. In the case at hand,
we proceed as follows.
\begin{enumerate}
\item First we compute the generators $\tilde M_1 \cap \tilde
M_2$ in the ring, with the algorithm described in Lemma \ref{lem int},
\begin{gather}
  \label{eq:1}
  \mathbb B = \mathbb Q[z_1, z_3, z_4, z_6, z_8, z_9, z_{10}, z_{11}, s_{12}, s_{13},
  s_{14}, s_{23}, s_{24}]
\end{gather}
using a block ordering with $[z_1, z_3, z_4, z_6, z_8, z_9, z_{10}, z_{11}] \succ [s_{12}, s_{13},
  s_{14}, s_{23}, s_{24}]$.
\item Then we map the generators of $\tilde M_1 \cap \tilde
M_2$ from the previous step, to the ring,
\begin{gather}
  \label{eq:1}
  \tilde{\mathbb A} = \mathbb Q( s_{12}, s_{13},
  s_{14}, s_{23}, s_{24}) [z_1, z_3, z_4, z_6, z_8, z_9, z_{10}, z_{11}]
\end{gather}
and simplify the generators.
\item Finally, we use the heuristic algorithm (Algorithm \ref{alg trim}) to delete redundant
  generators in $\tilde M_1 \cap \tilde
M_2$.
\end{enumerate}
These steps are automated by our {\sc Singular} program. The generators of $\tilde M_1 \cap \tilde M_2$ for the other $9$ cuts listed in eq.~\eqref{eq:spanning_set_of_cuts}
were obtained in the same manner. In table \ref{timings} we provide timings for the computation of the module
intersections for the relevant cuts of the  non-planar hexagon-box diagram. The timings are
in seconds on an Intel Xeon E5-2643 machine with $24$ cores, $3.40$ GHz
and $384$ GB
of RAM.

\begin{table}[ptb]
\begin{center}%
\begin{tabular}
[c]{|l|r|r|}%
\hline
cut & time / sec & mem/GB\\\hline
$\{1,5,7\}$ & $218$& 4.3\\
$\{2,5,7\}$ & $43$&1.1\\
$\{2,5,8\}$ & $303$&6.7\\
$\{2,6,7\}$ & $743$&9.8\\
$\{3,5,8\}$ & $404$&7.4\\
$\{3,6,7\}$ & $699$&11.0\\
$\{3,6,8\}$ & $24$&1.0\\
$\{4,6,8\}$ & $797$&13.7\\
$\{1,4,5,8\}$ & $53$&1.7\\
$\{1,4,6,7\}$ & $196$&3.0\\
\hline
\end{tabular}
\end{center}
\caption{Timings and RAM usages for the module intersection computations for the relevant cuts
of the non-planar hexagon-box diagram.}%
\label{timings}
\end{table}

Furthermore, using the heuristics as specified in  Algorithm \ref{alg trim} we are able to
reduce the size of the generating systems as specified in table
\ref{tab trim}. For instance,
the trimmed generating system for the module intersection on the cut $\{2,5,7\}$
consists of $24$ generators, which are fully analytic in
$s_{12}, s_{13}, s_{14}, s_{23}, s_{24}$. The generators
can be downloaded from
\begin{quote}
  \url{https://raw.githubusercontent.com/yzhphy/hexagonbox_reduction/master/cut257/module_intersection_257.txt}\,.
\end{quote}
Each list in this file is in the format
\begin{equation}
  \label{eq:11}
  (\tilde b_1,0,\tilde b_3,\tilde b_4,0,\tilde b_6,0,\tilde b_8,\tilde
  a_9,\tilde a_{10},\tilde a_{11}, \tilde b) \,.
\end{equation}

\begin{table}[ptb]
\begin{center}%
\begin{tabular}
[c]{|c|r|r|}%
\hline
cut & original size / MB & trimmed size / MB \\\hline
$\{1,5,7\}$ & $68$ & $10$\\
$\{2,5,7\}$ & $25$ & $1.4$\\
$\{2,5,8\}$ & $49$ & $3.1$\\
$\{2,6,7\}$ & $100$ & $2.8$\\
$\{3,5,8\}$ & $97$ & $3.7$\\
$\{3,6,7\}$ & $80$ & $3.6$\\
$\{3,6,8\}$ & $10$ & $1.6$\\
$\{4,6,8\}$ & $21$ & $1.6$\\
$\{1,4,5,8\}$ & $4.4$ & $3.6$\\
$\{1,4,6,7\}$ & $9.4$ & $4.1$\\
\hline
\end{tabular}
\end{center}
\caption{String sizes of the original and trimmed generating systems, given in megabytes.}%
\label{tab trim}
\end{table}

\subsection{Reduction of IBP identities on cuts}
After computing the module intersections on the $10$ cuts, we use
eq.~\eqref{IBP_cut} to generate IBP identities without doubled propagators
on each cut in turn. As described in section~\ref{section_REF}, we use linear algebra techniques to select the
relevant and independent IBP identities for reducing the target integrals to the
master integrals on each cut. Characteristics of the resulting linearly
independent linear systems, all analytic in $s_{12}, s_{13}, s_{14}, s_{23}, s_{24}$
and the spacetime dimension $D$, are presented in table~\ref{tab:linear_systems}.
\begin{table}[ht]
  \centering
  \begin{tabular}{|c|c|c|c|c|}
\hline
    cut & \# equations & \# integrals & byte size / MB & density \\
\hline
$\{1,5,7\}$ & $1144$ & $1177$ & $1.2$ & $1.4\%$\\
$\{2,5,7\}$ & $1170$ & $1210$ & $0.99$ & $1.3\%$\\
$\{2,5,8\}$ & $1152$ & $1190$ & $1.1$ &$1.5\%$\\
$\{2,6,7\}$& $1118$& $1155$ & $1.0$ & $1.5\%$\\
$\{3,5,8\}$&$1160$& $1202$& $1.2$ &$1.5\%$\\
$\{3,6,7\}$&$1173$& $1217$ & $1.3$ & $1.7\%$\\
$\{3,6,8\}$&$1135$& $1176$ & $0.77$ & $1.2\%$\\
$\{4,6,8\}$&$1140$& $1176$ & $0.94$  & $1.2\%$\\
$\{1,4,5,8\}$&$700$& $723$ & $0.69$ & $1.7\%$\\
$\{1,4,6,7\}$&$683$& $706$ & $0.66$ & $1.6\%$\\
\hline
  \end{tabular}
  \caption{Linear systems of IBP identities on the $10$ cuts given
  in eq.~\eqref{eq:spanning_set_of_cuts} for reducing the target hexagon-box
    integrals in eq.~\eqref{eq:target}. The storage size is for the
    corresponding matrix and is measured in megabytes. The density refers to the
    percentage of non-vanishing elements in the matrices of these linear systems.}
  \label{tab:linear_systems}
\end{table}
It is clear from this table that these linear systems are very
sparse and of relatively small byte size.

For the purpose of demonstration, we have made the IBP relations on the cut $\{2,5,7\}$ available at
\begin{quote}
  \url{https://raw.githubusercontent.com/yzhphy/hexagonbox_reduction/master/cut257/hexagonbox_257_deg4.txt}
  \, .
\end{quote}

We calculated the row reduced echelon form (RREF) of these linear systems
using the {\it total pivoting} strategy, in order to retain the sparsity in
intermediate steps. For the linear systems corresponding to some cuts,
we are able to directly obtain the RREF analytically
in $s_{12}, s_{13}, s_{14}, s_{23}, s_{24}$ and the spacetime dimension $D$.
For linear systems corresponding to other cuts, we calculate the RREF with integer values
of one or two $s_{ij}$ repeatedly. The fully analytic RREF is then readily
obtained by our private heuristic interpolation algorithm. These
computations are implemented in our primitive {\sc Mathematica}
code. A much more efficient RREF code in {\sc Singular} is in
preparation.

We also remark that for the computation of different cuts, a useful
trick is to work with different choices of independent Mandelstam
invariants. For a specific cut, a suitable choice of Mandelstam
invariants can speed up the computation and also save the RAM
usage. After the RREF is obtained, we use the program {\sc Fermat}
\cite{Fermat} to
replace the new Mandelstam variables by the original choice $s_{12},
s_{13}, s_{14}, s_{23}, s_{24}$.

The running time and resources required of our {\sc Mathematica} code
depends on the size of the linear systems and also {\it the complexity of coefficients} in
the reduced IBPs. For the smallest linear system, the one corresponding to the
quadruple cut $\{1,4,6,7\}$, our {\sc Mathematica} RREF code obtained the
fully analytical RREF
in $31$ minutes with one core and $1.5$ GB RAM usage on a
laptop with $16$ GB RAM. For the largest linear system, that
corresponding to the triple cut $\{3,6,7\}$, we run our {\sc Mathematica}
code and assign integer values to two Mandelstam invariants. This finished in $2.5$ hours
and used $1.8$ GB RAM. We parallelize the running with
various integer values, evaluating $440$ points on the IRIDIS High Performance Computing Facility. The
semi-analytic results are then
interpolated to the fully analytic RREF result by our heuristic multivariate
interpolation algorithm, requiring a CPU time of $23$ minutes with one core
and $15$ GB RAM usage.

\subsection{Merging of IBP reductions and final result}
Having obtained the RREF of the $10$ IBP systems on cuts, it is
straightforward to merge the coefficients to obtain the complete IBP reductions
without applied cuts \cite{Larsen:2015ped}. For example, to determine the
reduction coefficient of $\mathcal{I}_{42} =I(1, 0, 1, 0, 1, 1, 1, 0, 0, 0, 0)$,
we search for $\mathcal{I}_{42}$ in table~\ref{table:pre-master} and find that it is supported on the cuts
$\{1,5,7\}$ and $\{3,6,7\}$. We find that, for every target integral, the reduction
coefficient for $\mathcal I_{42}$ on the cut $\{1,5,7\}$ equals the
corresponding coefficient on the cut $\{3,6,7\}$, as must be the case.
Thus, for each target integral in turn, we obtain its reduction coefficient
of each $\mathcal{I}_j$ as the reduction coefficient of $\mathcal{I}_j$ on each cut
supporting $\mathcal{I}_j$. In this way we obtain the complete reduction
of the target integral \emph{without applied cuts}.

After merging the $10$ RREFs we reduce the $32$ target integrals in
eq.~\eqref{eq:target} to the $75$ ``pre''-master integrals
$\{ \mathcal{I}_1, \ldots, \mathcal{I}_{75} \}$. We then apply
the global symmetry in eq.~\eqref{global_symmetry} to eliminate the redundant
integrals $\mathcal{I}_{63}$ and $\mathcal{I}_{68}$, and then finally obtain the
complete analytic reduction to the $73$ master integrals. The IBP
reduction result, as replacement rules (in a compressed file of the
size $280$ MB), can be downloaded from the link provided in the introduction.

\subsection{Comparison with other IBP solvers}

We have checked our results with FIRE5~\cite{Smirnov:2014hma} in the C++ implementation with
LiteRed~\cite{Lee:2012cn}, and KIRA (v 1.1)
\cite{Maierhoefer:2017hyi}. We note that it is not an easy task to
perform the analytic IBP reduction of
hexagon-box integrals with degree-four numerators. Due
to the RAM limit we faced,  
we have not yet been able to obtain the analytic IBP reduction for
degree-four-numerator hexagon-box integrals from FIRE5 or KIRA with
the Rackham cluster, on a node with two $10$ core Intel Xeon V4 CPU and $256$ GB of
RAM\footnote{We are currently running the analytic IBP reduction with
FIRE5 and Kira on a node with more RAM available, and presently waiting for the result.}. On the
other hand, it is easy to obtain the numeric IBP reduction for
degree-four-numerator hexagon-box integrals with FIRE5 and KIRA. For
example, if all the five $s_{ij}$ are taken to be integers, FIRE5 is
able to generate the purely numeric IBP reductions in about $6.0$ hours. We ran FIRE5
purely numerically for many sets of integer values, and all the numeric IBP reduction results are
consistent with our analytic IBP reductions, after a basis change between
the {\sc Azurite} integral basis and the FIRE basis.

\section{Conclusion and outlook}\label{section_conclusion}
In this paper we have presented a new and efficient method for
computing integration-by-parts (IBP) reductions based on the ideas developed
in refs.~\cite{Larsen:2015ped, Zhang:2016kfo}.
We used a module intersection method \cite{Zhang:2016kfo} to trim IBP
systems on unitarity cuts. The key idea is the efficient analytic
computation of module intersections, which is achieved by the
mathematical technique of treating the kinematical
parameters as variables and using a block monomial ordering for which
[variables] $\succ$ [parameters]. This trick could also be helpful for
other types of multi-loop multi-scale amplitude computations. After solving the module intersection
problems, we find linear IBP systems on cuts which are of very small sizes. This part
is implemented in our highly efficient and automated {\sc Singular}
code. For example, the analytic IBP vectors for the hexagon-box integral with
no doubled propagators on triple cuts can be computed in minutes.

Furthermore, we applied sophisticated sparse linear algebra techniques to
compute the row reduced echelon form of the IBP identities on cuts. For
example, we applied a weighted version of the Markowitz pivoting
strategy to retain the sparsity in intermediate steps of the row
reduction. We have implemented the sparse linear algebra part of our
algorithm as a preliminary {\sc Mathematica} code. A more
efficient {\sc Singular} implementation will become available in the near future.

In this paper, we have solved a cutting-edge IBP reduction problem fully analytically:
that of the reduction of non-planar five-point hexagon-box integrals,
with numerators of degree four, to the basis of $73$
master integrals. Our result has been verified (numerically) with the
state-of-art IBP reduction programs.

We are currently preparing an automated implementation based on open-source
software, such as {\sc Singular}, with which we expect to be able to
solve yet more difficult IBP reduction problems. We expect that our method will
boost the computation of NNLO cross sections for more
complicated $2\rightarrow 3$ scattering processes and higher-multiplicity cases.

Finally, we also expect that the ideas presented in this paper, such as the
module intersection for trimming integral relations, the special ordering for
variables and parameters, and sparse linear algebra techniques, can
also be combined in various ways with other existing computational methods such as:
the finite field sampling and reconstruction approach
\cite{vonManteuffel:2014ixa, Peraro:2016wsq}, the dual conformal
symmetry construction of IBP vectors \cite{Bern:2017gdk} and
the newly developed D-module method \cite{Bitoun:2017nre}, in order to
determine tailored optimal reduction strategies for various classes of
Feynman integrals.

\section*{Acknowledgments}
We thank Babis Anastasiou, Simon Badger, Zvi Bern, Rutger Boels, Christian Bogner, Jorrit Bosma,
Charles Bouillaguet,  Wolfram Decker, Lance Dixon, James Drummond, {\"O}mer G{\"u}rdo{\u g}an,
Johannes\- Henn, Enrico Herrmann, Harald Ita, Mikhail Kalmykov, David Kosower, Dirk Kreimer, Roman N. Lee, Hui Luo, Andreas von
Manteuffel, Alexander Mitov, David Mond, Erik Panzer, Costas
Papadopoulos, Tiziano Peraro, Gerhard Pfister,
Robert Schabinger, Peter Uwer, Gang Yang and Mao
Zeng for very enlightening discussions.
The research leading to these results has received
funding from Swiss National Science Foundation (Ambizione grant PZ00P2
161341), from the National Science Foundation under Grant No. NSF
PHY17-48958, and from the European Research Council (ERC) under
the European Union's Horizon 2020 research and innovation programme (grant agreement No 725110).
The work of YZ is also partially supported by the
Swiss National Science Foundation through the NCCR SwissMap, Grant number 141869.
The work of AG is supported by the Knut and Alice Wallenberg Foundation under grant \#2015-0083.
The work of KJL is supported by ERC-2014-CoG, Grant number 648630 IQFT.
The work of JB and HS was supported by Project II.5 of SFB-TRR 195
``Symbolic Tools in Mathematics and their Application'' of the German Research Foundation (DFG).
The authors acknowledge the use of the IRIDIS High Performance Computing Facility,
and associated support services at the University of Southampton.
Part of the computations were also performed on resources provided
by the Swedish National Infrastructure for Computing (SNIC) at Uppmax.
The authors moreover acknowledge the use of the Euler computing cluster,
associated with ETH Z{\"u}rich.

\bibliography{hb_reduction}
\end{document}